\documentclass[11pt, reqno]{amsart}

\usepackage{amsmath, amsfonts, amssymb, amsbsy, bigstrut, graphicx, enumerate,  upref, longtable, comment}

\usepackage{pstricks}

\usepackage[numbers, sort&compress]{natbib} 

\usepackage[breaklinks]{hyperref}




\newcommand{\ep}{\epsilon}

\newtheorem{thm}{Theorem}[section]
\newtheorem{lmm}[thm]{Lemma}

\newtheorem{prop}[thm]{Proposition}
\theoremstyle{definition}

\newcommand{\cov}{\mathrm{Cov}}
\newcommand{\dm}{\mathcal{D}}

\newcommand{\ee}{\mathbb{E}}

\newcommand{\mx}{\mathcal{X}}

\newcommand{\rr}{\mathbb{R}}
\newcommand{\smallavg}[1]{\langle #1 \rangle}

\newcommand{\st}{\sqrt{t}}
\newcommand{\sst}{\sqrt{1-t}}

\newcommand{\var}{\mathrm{Var}}

\newcommand{\xp}{X^\prime}

\newcommand{\xx}{\mathcal{X}}

\newcommand{\zz}{\mathbb{Z}}



\numberwithin{equation}{section}

\newcommand{\eq}[1]{\begin{align*} #1 \end{align*}}

\newcommand{\tv}{d_{\textup{TV}}}

\newcommand{\si}{\langle \sigma_i\rangle}
\newcommand{\hs}{\hat{\sigma}}



\begin{document}
\title[CLT for the RFIM free energy]{Central limit theorem for the free energy of  the  random field Ising model}
\author{Sourav Chatterjee}
\address{\newline Department of Statistics \newline Stanford University\newline Sequoia Hall, 390 Serra Mall \newline Stanford, CA 94305\newline \newline \textup{\tt souravc@stanford.edu}}
\thanks{Research partially supported by NSF grant DMS-1608249}
\keywords{Random field Ising model, central limit theorem, free energy}
\subjclass[2010]{82B44, 60K35}

\begin{abstract}
A central limit theorem is proved for the free energy of the random field Ising model with all plus or all minus boundary condition, at any temperature (including zero temperature) and any dimension. This solves a problem posed by Wehr and Aizenman in 1990. The proof uses a variant of Stein's method. 
\end{abstract}

\maketitle

\section{Introduction}
Take any $d\ge 1$. Let $\Lambda$ be any finite subset of $\zz^d$. Let $\partial \Lambda$ be the set of all $x\in \zz^d\setminus \Lambda$ that are adjacent to some point of $\Lambda$. Let $\Sigma=\{-1,1\}^\Lambda$, $\Gamma = \{-1,1\}^{\partial\Lambda}$ and $\Phi = \rr^\Lambda$. Given $\sigma \in \Sigma$, $\gamma \in \Gamma$ and $\phi\in \Phi$, define 
\[
H_{\gamma, \phi}(\sigma) := -\frac{1}{2}\sum_{\substack{x,y\in \Lambda,\\x\sim y}} \sigma_x\sigma_y - \sum_{\substack{x\in \Lambda, \, y\in \partial \Lambda,\\ x\sim y}}\sigma_x\gamma_y- \sum_{x\in \Lambda} \phi_x \sigma_x,
\]
where $x\sim y$ means that $x$ and $y$ are neighbors. Take any $\beta\in [0,\infty]$. The Ising model on $\Lambda$ with boundary condition $\gamma$, inverse temperature $\beta$, and external field $\phi$, is the probability measure on $\Sigma$ with probability mass function proportional to $e^{-\beta H_{\gamma, \phi}(\sigma)}$. When $\beta=\infty$, this is simply the uniform probability measure on the configurations that minimize the energy. 

Now suppose that $(\phi_x)_{x\in \Lambda}$ are i.i.d.~random variables instead of fixed constants. Then the probability measure defined above becomes a random probability measure. This is known as the random field Ising model (RFIM). We will refer to the law of $\phi_x$ as the random field distribution. 

The random field Ising model was introduced by \citet{imryma75} in 1975 as a simple example of a disordered system. Imry and Ma predicted that the model does not have an ordered phase in dimensions one and two, but does exhibit a phase transition in dimensions three and higher. The existence of the phase transition in dimension three was partially proved by \citet{imbrie84, imbrie85}, who showed that there are two macroscopic ground states in the 3D RFIM. The phase transition at nonzero temperature was finally established by  \citet{bk87, bk88} in 1987, settling the Imry--Ma conjecture in $d\ge 3$. A few years later, \citet{aw89, aw90} proved  the non-existence of an ordered phase in $d\le 2$. The proof of the Imry--Ma conjecture is regarded as a notable success story of mathematical physics, because there was considerable debate within the theoretical physics community about the validity of the conjecture. See \citet[Chapter 7]{bovier06} for more details.

Another important paper on the random field Ising model is the 1990 paper of~\citet{wehraizenman90} on the fluctuations of the free energy of the RFIM and related models. The free energy of the RFIM on $\Lambda$ at inverse temperature $\beta$ and boundary condition $\gamma$ is defined as 
\[
F(\gamma, \phi,\beta) := -\frac{1}{\beta} \log \sum_{\sigma \in \Sigma} e^{-\beta H_{\gamma,\phi}(\sigma)}. 
\]
When $\beta =\infty$, the free energy is simply the ground state energy:
\[
F(\gamma, \phi, \infty) = \min_{\sigma\in \Sigma} H_{\gamma,\phi}(\sigma).
\]
\citet{wehraizenman90} proved that under mild conditions on the random field distribution, the variance of $F$ is upper and lower bounded by constant multiples of the size of $\Lambda$. In the same paper, Wehr and Aizenman posed the problem of proving a central limit theorem for $F$  as the size of $\Lambda$ tends to infinity. The main result of this paper is a solution of this question for the RFIM with plus or minus boundary condition. The plus boundary condition is the boundary condition $\gamma$ where $\gamma(x)=+1$ for all $x\in \partial \Lambda$. Similarly, the minus boundary condition has $\gamma(x)=-1$ for all $x\in \partial\Lambda$. When $d\le 2$, the result holds for any boundary condition.

Results about fluctuations of the free energy have a number of applications. As stated in \cite{wehraizenman90}, bounds on fluctuations of the free energy were instrumental in the proof of rounding effects of the quenched randomness on first-order phase transitions in low-dimensional systems. Another application in a different model, also discussed in \cite{wehraizenman90}, is an inequality for characteristic exponents of the model of directed polymers in a random environment. Central limit theorems give the most precise information about fluctuations, and they are also mathematically interesting in their own right. Central limit theorems for the free energy have been proved for disordered systems with mean-field interactions such as the Sherrington--Kirkpatrick model of spin glasses~\cite{alr87, chl18, chenetal17}. But as far as I am aware, no such results were available for disordered systems on lattices prior to this paper.

The main result has two parts, corresponding to the cases $\beta<\infty$ and $\beta=\infty$. The $\beta<\infty$ case is the following. 
\begin{thm}\label{finite}
Take any $d\ge 1$. Let $\{\Lambda_n\}_{n\ge 1}$ be a sequence of finite nonempty subsets of $\zz^d$. For each $n$, consider the RFIM on $\Lambda_n$ with plus boundary condition, at inverse temperature $\beta \in (0,\infty)$. Suppose that the random field distribution has finite moment generating function. Let $F_n$ be the free energy of the model. Suppose that $|\partial\Lambda_n|= o(|\Lambda_n|)$ as $n\to \infty$. Then there is a finite positive constant $\sigma^2$, depending only on $\beta$, $d$ and the random field distribution (and not on the sequence $\{\Lambda_n\}_{n\ge 1}$), such that 
\[
\lim_{n\to\infty} \frac{\var(F_n)}{|\Lambda_n|}= \sigma^2,
\]
and
\[
\frac{F_n- \ee(F_n)}{\sqrt{|\Lambda_n|}} \stackrel{\dm}{\to} \mathcal{N}(0,\sigma^2),
\]
where $\stackrel{\dm}{\to}$ denotes convergence in law, and $\mathcal{N}(0,\sigma^2)$ is the normal distribution with mean zero and variance $\sigma^2$. The same result holds for minus boundary condition, possibly with a different value of $\sigma^2$. If $d\le 2$, then the above conclusion holds under any arbitrary sequence of boundary conditions. 
\end{thm}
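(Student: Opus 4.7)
I would view $F_n$ as a Lipschitz function of the i.i.d.\ fields $(\phi_x)_{x \in \Lambda_n}$ and attack the theorem using a variant of Stein's method of exchangeable pairs, built on the identity
\[
\frac{\partial F_n}{\partial \phi_x} = -\langle \sigma_x \rangle_{\gamma, \phi, \beta} \in [-1,1].
\]
In particular, the Efron--Stein inequality combined with the finite MGF hypothesis already gives the Wehr--Aizenman upper bound $\var(F_n) = O(|\Lambda_n|)$, and the matching lower bound from the same work will eventually secure $\sigma^2 > 0$.

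The exchangeable pair I would work with is $(F_n, F_n')$, where $F_n' := F(\gamma, \phi', \beta)$ with $\phi'$ obtained from $\phi$ by resampling the single coordinate $\phi_X$ at a uniformly chosen site $X \in \Lambda_n$ from an independent copy of the field distribution. Setting $W_n := (F_n - \ee F_n)/\sqrt{|\Lambda_n|}$, a one-coordinate Taylor expansion of $F_n$ that uses the above derivative identity produces a Stein relation
\[
\ee[W_n' - W_n \mid \phi] = -\lambda_n W_n + R_n,
\]
with $\lambda_n$ of order $|\Lambda_n|^{-1}$ and a remainder $R_n$ controlled by the second-derivative terms $-\beta(\langle \sigma_x^2\rangle - \langle \sigma_x\rangle^2)$ and by higher cumulants of the field. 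Stein's quantitative Wasserstein bound between $W_n$ and $\mathcal{N}(0, \sigma_n^2)$, with $\sigma_n^2 := \var(F_n)/|\Lambda_n|$, then reduces to (i) concentration of $\ee[(W_n'-W_n)^2 \mid \phi]$ around $2 \lambda_n \sigma_n^2$ and (ii) a bound on $\ee|R_n|$.

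The main obstacle is step (i): the RFIM lacks exponential decay of correlations in the ordered regime ($d\ge 3$ and large $\beta$), so one cannot argue pointwise that $\langle \sigma_x\rangle$ depends weakly on $\phi_y$ when $|x-y|$ is large. Here the plus boundary condition plays the decisive role. By FKG and Holley, $\langle \sigma_x\rangle$ is a coordinatewise monotone function of $\phi$ under $+$ BC, so the disorder covariances $\cov(\langle \sigma_x\rangle, \langle \sigma_y\rangle)$ are non-negative and
\[
\sum_{x,y \in \Lambda_n} \cov\bigl(\langle \sigma_x \rangle, \langle \sigma_y \rangle\bigr) = \var\Bigl(\sum_{x\in \Lambda_n} \langle \sigma_x \rangle\Bigr),
\]
a quantity that an Efron--Stein estimate applied to this monotone disorder-functional bounds by a multiple of $|\Lambda_n|$. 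Plugging this into the Stein identity will yield the CLT; combining it with a subadditive thermodynamic-limit comparison between boxes (where the hypothesis $|\partial \Lambda_n|=o(|\Lambda_n|)$ is what makes boundary contributions negligible) will produce convergence $\sigma_n^2 \to \sigma^2$ to the advertised universal constant.

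Finally, for arbitrary boundary conditions in $d\le 2$, the monotonicity input would be replaced by the Aizenman--Wehr theorem that the RFIM has no ordered phase in these dimensions: the boundary condition then has vanishing effect in the bulk, providing enough averaged decay of the relevant disorder correlations for the same Stein-method machinery to carry through uniformly in the chosen $\gamma$.
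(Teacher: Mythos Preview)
Your outline has the right general shape---Stein's method applied to $F_n$ through one-coordinate resampling, with $\partial_{\phi_x}F_n=-\langle\sigma_x\rangle$ as the basic object---but the decisive step (i) rests on a claim that does not go through.

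The assertion that an Efron--Stein estimate applied to the monotone functional $M:=\sum_{x\in\Lambda_n}\langle\sigma_x\rangle$ yields $\var(M)\le C|\Lambda_n|$ is the gap. Efron--Stein gives $\var(M)\le\tfrac12\sum_y\ee\bigl[(M-M^{(y)})^2\bigr]$, and
\[
M-M^{(y)}=\sum_{x\in\Lambda_n}\bigl(\langle\sigma_x\rangle-\langle\sigma_x\rangle^{(y)}\bigr),
\]
which by the very monotonicity in $\phi$ you invoke has all summands of one sign and hence no cancellation. Bounding each $\ee[(M-M^{(y)})^2]$ by a constant therefore requires control of the quenched single-site susceptibility $\sum_x\mathrm{Cov}_{\mathrm{Gibbs}}(\sigma_x,\sigma_y)$, and no such bound is available for the RFIM in the ordered regime ($d\ge3$, large~$\beta$). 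Monotonicity in $\phi$ buys you nonnegativity of the disorder covariances $\cov(\langle\sigma_x\rangle,\langle\sigma_y\rangle)$, but that makes the sum harder, not easier, to control. The same obstruction blocks the concentration of $\sum_x\ee[(\Delta_xF)^2\mid\phi]$ that Stein's method actually needs.

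The paper's proof uses FKG for a different and sharper purpose. The relevant monotonicity is in the \emph{domain}: with plus boundary, $\langle\sigma_i\rangle_{\Lambda,+}$ is decreasing as $\Lambda$ grows, so $\langle\sigma_i\rangle_{\Lambda_{i,k},+}\downarrow\langle\sigma_i\rangle_+$ and, uniformly over all $\Lambda\supseteq\Lambda_{i,k}$,
\[
\ee\bigl|\langle\sigma_i\rangle_{\Lambda,+}-\langle\sigma_i\rangle_{\Lambda_{i,k},+}\bigr|\le\delta_k\to0.
\]
This is not a decay-of-correlations statement; it says that $\langle\sigma_i\rangle_{\Lambda,+}$ is well approximated in $L^1$ by a function of the disorder in the box $\Lambda_{i,k}$ alone. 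That yields a \emph{local} approximation $g_i$ of the discrete increment $\Delta_iF$, after which the paper applies a Stein-type bound (its Theorem~2.6, derived from Chatterjee's $T$-statistic rather than from an approximate linearity relation $\ee[W'-W\mid\phi]\approx-\lambda W$) built precisely for functions with approximately local discrete derivatives: the error splits as an approximation term of size $\delta_k^{1/8}|\Lambda_n|$ plus a local-overlap term of size $k^{d/2}|\Lambda_n|^{1/2}$, and sending $n\to\infty$ then $k\to\infty$ finishes. The convergence and sequence-independence of $\var(F_n)/|\Lambda_n|$ come from the same local approximation combined with an explicit combinatorial identity for the weights in $S$, not from a subadditivity argument.

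In short, the missing idea is that domain monotonicity under $+$ boundary lets one replace $\langle\sigma_x\rangle_{\Lambda,+}$ by a function of the disorder in a fixed window around $x$; monotonicity of $\langle\sigma_x\rangle$ in the field $\phi$ does not supply this, and without it neither your Efron--Stein step nor the Stein concentration can be closed.
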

Note that in Theorem \ref{finite}, the only condition that we imposed on the random field distribution is that it has finite moment generating function. For the $\beta=\infty$ case, our proof technique requires that (a) the random field distribution is continuous, and (b) it is a push-forward of the standard normal distribution under a Lipschitz map (with arbitrary Lipschitz constant). For example, the normal distribution with any mean and any variance belongs to this class. The uniform distribution on any interval is another example. 
\begin{thm}\label{infinite}
Take any $d\ge 1$ and let $\{\Lambda_n\}_{n\ge 1}$ be a sequence of finite subsets of $\zz^d$. Suppose, as in Theorem \ref{finite}, that $|\partial \Lambda_n|= o(|\Lambda_n|)$ as $n\to \infty$. Let $G_n$ be the ground state energy of the RFIM on $\Lambda_n$ with plus boundary condition. Suppose that the random field distribution satisfies the conditions stated above. Then there is a finite positive constant $\sigma^2$, depending only on $d$ and the random field distribution (and not on the sequence $\{\Lambda_n\}_{n\ge 1}$), such that 
\[
\lim_{n\to\infty}\frac{\var(G_n)}{|\Lambda_n|}= \sigma^2,
\]
and  
\[
\frac{G_n-\ee(G_n)}{\sqrt{|\Lambda_n|}} \stackrel{\dm}{\to} \mathcal{N}(0,\sigma^2).
\]
The same result holds for minus boundary condition, possibly with a different value of $\sigma^2$. If $d\le 2$, then the above conclusion holds under any arbitrary sequence of boundary conditions.
\end{thm}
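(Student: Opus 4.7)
The plan is to exploit the Gaussian structure provided by the hypothesis $\phi_x = f(Z_x)$, with $(Z_x)$ i.i.d.\ standard normal and $f$ Lipschitz: under this substitution $G_n$ becomes a Lipschitz (in fact, concave and piecewise-affine) function of the Gaussian vector $Z = (Z_x)_{x\in\Lambda_n}$, with almost-everywhere derivative $\fpar{G_n}{Z_x} = -\sigma^*_x(Z)\,f'(Z_x)$, where $\sigma^*$ is the almost surely unique ground-state configuration. I would then apply a variant of Stein's method for Gaussian functionals, for instance in the spirit of Chatterjee's second-order Poincar\'e inequality or a one-coordinate-resampling exchangeable pair, which bounds the distance to a normal in terms of moments of the gradient and the (possibly distributional) Hessian of the functional.

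Since $G_n$ is only Lipschitz, a smoothing step is needed. The natural one is to approximate $G_n$ by the finite-temperature free energy $F_n(\beta)$, which is $C^\infty$ in $\phi$ and already satisfies the desired CLT by Theorem~\ref{finite}. The strategy is to let $\beta = \beta_n \to \infty$ and establish (a) $F_n(\beta_n) - G_n = o(\sqrt{|\Lambda_n|})$ in $L^2$, and (b) the variance constants $\sigma^2(\beta)$ from Theorem~\ref{finite} converge to a positive, finite limit $\sigma^2(\infty)$; the conclusion then follows by a Slutsky-type argument. Existence of $\lim_n \var(G_n)/|\Lambda_n|$ should be proved by an Efron--Stein or subadditivity argument, with positivity supplied by the Wehr--Aizenman lower bound.

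The main obstacle is controlling the second-order terms as $\beta_n \to \infty$. On one hand, the trivial bound $|F_n(\beta)-G_n|\le \beta^{-1}|\Lambda_n|\log 2$ is far too loose, so step (a) requires a refined estimate: using continuity of the random field (hence a.s.\ uniqueness of the ground state) together with Peierls-type arguments, one must show that the low-lying excitations contributing to the partition function at large $\beta$ are typically few and spatially localized. On the other hand, the Hessian entries $\mpar{F_n(\beta)}{\phi_x}{\phi_y} = -\beta\,\cov_\beta(\sigma_x,\sigma_y)$ carry a factor of $\beta$, so any second-order Stein estimate demands a corresponding decay of truncated Gibbs covariances at low temperature. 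Balancing these two constraints — $\beta_n$ large enough that $F_n(\beta_n)$ tracks $G_n$, yet small enough that the Stein bound does not blow up — is the heart of the argument, and is likely to require a dedicated ground-state-based exchangeable-pair construction that sidesteps operator-norm estimates of $\text{Hess}\,F_n(\beta)$ altogether.
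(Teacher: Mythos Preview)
Your opening paragraph correctly identifies the Gaussian representation $\phi_x=u(Z_x)$ and the a.e.\ gradient $\partial_{Z_x}G_n=-\hat{\sigma}_x\,u'(Z_x)$, but both concrete routes you then propose run into real obstacles rather than merely technical ones.

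The second-order Poincar\'e inequality requires operator-norm control of the Hessian, and the Hessian of $G_n$ is a distribution supported on the hypersurfaces where the ground state jumps; the paper explicitly notes that this route could not be made to work. Your alternative---pass to $F_n(\beta_n)$ and invoke Theorem~\ref{finite}---does not work as stated either: Theorem~\ref{finite} is for \emph{fixed} $\beta$, and every constant in its proof (notably the decay quantity $\delta_k$) depends on $\beta$ in an uncontrolled way, so you cannot cite it at $\beta=\beta_n\to\infty$ without redoing the entire estimate uniformly in $\beta$, which is the original problem. (Incidentally, the trivial bound $|F_n(\beta)-G_n|\le\beta^{-1}|\Lambda_n|\log 2$ is \emph{not} too loose for your step (a): $\beta_n\gg\sqrt{|\Lambda_n|}$ already suffices. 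The difficulty is entirely on the Stein side, as your last paragraph concedes without resolving.)

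The paper's actual argument avoids both the Hessian and any diagonal sequence. It uses a purely first-order Gaussian Stein bound (Lemma~\ref{gaussian}, packaged as Theorem~\ref{normcont}) that controls $\tv(\mu,\nu)$ through $\var(T)$ with $T=\int_0^1(2\sqrt t)^{-1}\nabla G(Z)\cdot\nabla G(Z^t)\,dt$, $Z^t=\sqrt t\,Z+\sqrt{1-t}\,Z'$. To bound $\var(T)$ one approximates each $\partial_i G=-u'(Z_i)\hat\sigma_i$ by the \emph{local} surrogate $g_i=-u'(Z_i)\hat\sigma_i^k$, where $\hat\sigma_i^k$ is the ground-state spin in the radius-$k$ cube about $i$ with plus boundary. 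FKG monotonicity (inherited from finite $\beta$ via \eqref{silim}) gives $\hat\sigma_i^k\downarrow\hat\sigma_i^\infty$ and hence $\delta_k:=\ee|\hat\sigma_i^k-\hat\sigma_i^\infty|\to0$; this is the zero-temperature analogue of \eqref{lambdadiff} and is the only decay input needed. Since each $g_i$ depends on $O(k^d)$ coordinates, the replacement $S$ has $\var(S)=O(k^d|\Lambda|)$, and the same localization yields $|\var(G)-\ee(S)|\le C\delta_k^{1/2}|\Lambda|+Ck^d|\partial\Lambda|$, from which both the CLT and the existence of $\lim\var(G_n)/|\Lambda_n|$ follow by sending $n\to\infty$ and then $k\to\infty$. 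The limit $\beta\to\infty$ appears only at \emph{fixed} $\Lambda$ (Proposition~\ref{normprop}) to justify treating $\partial_i G$ as the derivative of the non-smooth $G$---a dominated-convergence step, not a diagonal argument. The missing idea in your proposal is precisely this gradient localization via FKG on ground-state spins, which replaces all second-order information.
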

The main tool for proving Theorem \ref{finite} is a method of normal approximation introduced in \cite{cha08}, where it was developed as an extension of Stein's method~\cite{stein72, stein86}. A `continuous' version of this method, developed in \cite{cha09a}, is our tool for proving Theorem \ref{infinite}. The extension to arbitrary boundary conditions in $d\le 2$ is possible because of the uniqueness of the infinite volume Gibbs state in $d\le 2$, famously proved by \citet{aw90}. A quantitative version of the Aizenman--Wehr result, such as the ones recently proved in \cite{chatterjee17} and \cite{ap18}, can be used to obtain rates of convergence in Theorems \ref{finite} and \ref{infinite} when $d\le 2$. In particular, if the rate from \cite{ap18} is used, then it should be possible to prove a rate of convergence of order $n^{-\alpha}$ for some small positive constant $\alpha$ using the methods of this paper.  

There are several questions that remain open about central limit theorems for the RFIM. The foremost is proving (or disproving) central limit theorems under arbitrary boundary conditions in $d\ge 3$. The main technical difficulty is that for arbitrary boundary conditions, it is not clear how to establish a result like inequality \eqref{lambdadiff} of Section \ref{finiteproof}, which is crucial for the proof. 

Another problem is to express the limiting variance $\sigma^2$ in some kind of a closed form, instead of just saying that it exists. The problem of getting any rate of convergence in $d\ge 3$ is also interesting and beyond the reach of existing ideas.

Lastly, one may wonder if the methods of this paper can be applied to prove CLTs in other disordered systems on lattices, such as the Ising spin glass. In principle the method should work as long as a decay of correlation result like inequality \eqref{lambdadiff} of Section \ref{finiteproof} can be established. At present, however, it is not known how to establish decay of correlations in the Ising spin glass except at high temperature.

\section{Technique}
First, let us briefly review the main result of \cite{cha08}. Recall that the Wasserstein distance $d_{\textup{W}}(\mu,\nu)$ between the two probability measures $\mu$ and $\nu$ on $\rr$ is defined to be the supremum of $|\int fd\mu-\int fd\nu|$ over all Lipschitz $f:\rr\to \rr$ with Lipschitz constant $1$.

Let $\xx$ be a measurable space and suppose that $X= (X_1,\ldots,X_n)$ is a vector of independent $\xx$-valued random variables. Let $\xp = (\xp_1,\ldots,\xp_n)$ be an independent copy of $X$. Let $[n] = \{1,\ldots, n\}$, and for each $A \subseteq [n]$, define the random vector $X^A$ as
\[
X^A_i =
\begin{cases}
\xp_i &\text{ if } i\in A,\\
X_i &\text{ if } i\not \in A.
\end{cases}
\]
For each $i$, let
\[
\Delta_i f(X) := f(X)-f(X^{\{i\}}),
\]
and for each $A\subseteq[n]$ and $i\not\in A$, let
\[
\Delta_i f(X^A) := f(X^A)-f(X^{A\cup\{i\}}). 
\]
Let
\begin{equation*}
T:= \frac{1}{2}\sum_{i=1}^n \sum_{A\subseteq [n]\setminus\{i\}} \frac{\Delta_i f(X)\Delta_i f(X^A)}{n{n-1 \choose |A|}}.
\end{equation*}
The following theorem is the main result of \cite{cha08}.
\begin{thm}[\cite{cha08}]\label{normthm} 
Let all terms be defined as above, and let $W = f(X)$. Suppose that $W$ has finite second moment, and let $\sigma^2:= \var(W)$. Let $\mu$ be the law of $(W-\ee(W))/\sigma$ and $\nu$ be the standard normal distribution on the real line. Then $\ee(T) = \sigma^2$ and
\[
d_{\textup{W}}(\mu,\nu) \le \frac{\sqrt{\var(\ee(T|W))}}{\sigma^2} + \frac{1}{2\sigma^3}\sum_{i=1}^n \ee|\Delta_if(X)|^3.
\]
\end{thm}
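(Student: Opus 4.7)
The plan is to prove this via Stein's method, exploiting the resampling structure $X \mapsto X^A$ to produce an exchangeable pair at each coordinate. The key tool is the swap operation $\tau_j$ on $(X,X')$ that exchanges $X_j$ with $X'_j$: for every $A \subseteq [n]\setminus\{j\}$, the pair $(X^A, X^{A\cup\{j\}})$ is exchangeable under $\tau_j$, and $\tau_j X = X^{\{j\}}$. Combining this symmetry with telescoping along a uniformly random permutation of $[n]$ will produce both the weights $1/[n\binom{n-1}{|A|}]$ inside $T$ and the antisymmetrization needed for Stein's method.

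First I would verify the variance identity $\ee T = \sigma^2$. Since $X'$ is an independent copy of $X$, $\sigma^2 = \ee[f(X)(f(X)-f(X'))]$. Let $\pi$ be a uniformly random permutation of $[n]$ independent of $(X,X')$ and set $A_i^\pi := \{\pi(1),\ldots,\pi(i)\}$. The pointwise telescoping
\[
f(X)-f(X') = \sum_{i=1}^n \Delta_{\pi(i)} f(X^{A_{i-1}^\pi})
\]
and averaging over $\pi$ convert the sum into $\sum_{j=1}^n\sum_{A\subseteq [n]\setminus\{j\}}\Delta_j f(X^A)/[n\binom{n-1}{|A|}]$. Multiplying by $f(X)$ and invoking $\ee[f(X)\Delta_j f(X^A)] = \tfrac{1}{2}\ee[\Delta_j f(X)\Delta_j f(X^A)]$, which follows from applying $\tau_j$ inside the expectation (the map sends $f(X)\mapsto f(X^{\{j\}})$ and flips the sign of $\Delta_j f(X^A)$), the sum reassembles as $\ee T$.

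Next I would derive the Stein identity. Starting from $\ee[(W-\ee W)g(W)] = \ee[(f(X)-f(X'))g(f(X))]$, the same telescoping plus $\tau_j$-symmetrization, now carrying along the factor $g(f(X))$, yields
\[
\ee[(W-\ee W) g(W)] = \frac{1}{2}\sum_{j,\,A\subseteq[n]\setminus\{j\}} \frac{\ee[\Delta_j f(X^A)\,(g(W)-g(W^{\{j\}}))]}{n\binom{n-1}{|A|}},
\]
with $W^{\{j\}} := f(X^{\{j\}})$. A first-order Taylor expansion gives $g(W)-g(W^{\{j\}}) = g'(W)\Delta_j f(X) + E_j$ with $|E_j|\le \tfrac{1}{2}\|g''\|_\infty(\Delta_j f(X))^2$. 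The linear part reassembles as $\ee[T g'(W)]$, while the remainder is bounded via H\"older and the distributional identity $\Delta_j f(X^A)\stackrel{d}{=}\Delta_j f(X)$, giving $\ee[|\Delta_j f(X^A)|(\Delta_j f(X))^2]\le \ee|\Delta_j f(X)|^3$, and hence (using $\sum_A [n\binom{n-1}{|A|}]^{-1}=1$) a total remainder of at most $\tfrac{1}{4}\|g''\|_\infty\sum_{i=1}^n \ee|\Delta_i f(X)|^3$.

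Finally I invoke Stein's method. For $W^*:=(W-\ee W)/\sigma$ and any $1$-Lipschitz test function, the Stein equation solver satisfies $\|g'\|_\infty\le 1$ and $\|g''\|_\infty\le 2$; rescaling back to $W$ gives
\[
d_{\textup{W}}(\mu,\nu) \le \frac{1}{\sigma}\sup_{g}\bigl|\sigma^2\,\ee[g'(W)] - \ee[(W-\ee W)g(W)]\bigr|
\]
where the supremum is over $g$ with $\|g'\|_\infty\le 1/\sigma$ and $\|g''\|_\infty\le 2/\sigma^2$. Since $\ee T=\sigma^2$, writing $\sigma^2\ee[g'(W)]-\ee[Tg'(W)]=\ee[(\sigma^2-\ee[T|W])g'(W)]$ and applying Cauchy--Schwarz bound this by $\|g'\|_\infty\sqrt{\var(\ee[T|W])}$, and combining with the Taylor remainder from the previous step yields exactly the claimed inequality. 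The main obstacle I anticipate is not a single hard estimate but the combinatorial insight of the first step: seeing that the peculiar weights $1/[n\binom{n-1}{|A|}]$ appearing in $T$ arise precisely from uniform random-permutation averaging is the essential trick. Once that is in place, the rest is a disciplined application of the swap-symmetrization identity, Taylor expansion, and standard Stein bounds.
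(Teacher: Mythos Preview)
The paper does not prove Theorem~\ref{normthm}; it is quoted verbatim as the main result of \cite{cha08} and used as a black box. Your outline is essentially the argument of \cite{cha08}: random-permutation telescoping to generate the weights $1/[n\binom{n-1}{|A|}]$, the swap symmetry $\tau_j$ to obtain both $\ee T=\sigma^2$ and the approximate Stein identity, Taylor expansion with the H\"older/equidistribution bound on the remainder, and the standard Stein solution estimates $\|g'\|_\infty\le 1$, $\|g''\|_\infty\le 2$ to land on the stated constants. The sketch is correct and there is nothing to compare against in this paper.
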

Recall that the Kolmogorov distance between two probability measures $\mu$ and $\nu$ on the real line is defined as
\[
d_{\textup{K}}(\mu, \nu) := \sup_{t\in \rr} |\mu((-\infty, t]) - \nu((-\infty, t])|.
\]
The Kolmogorov distance is more commonly used in probability and statistics than the Wasserstein distance. The bound on $d_{\textup{W}}(\mu,\nu)$ in Theorem \ref{normthm} can be used to get a bound on $d_{\textup{K}}(\mu,\nu)$ using the following simple observation made in \citet{chasound}: Let $\nu$ denote the standard normal distribution and let $\mu$ be any  probability measure on $\rr$. Then 
\begin{equation} 
\label{KKW}  
d_{\textup{K}}(\mu,\nu) \le 2\sqrt{d_{\textup{W}}(\mu,\nu)}.
\end{equation} 
The combination of Theorem \ref{normthm} and inequality \eqref{KKW} usually yields a suboptimal bound for the Kolmogorov distance. There is a recent improvement of Theorem \ref{normthm} by \citet{lp17} that gives optimal bounds for the Kolmogorov distance in many problems. 

Theorem \ref{normthm} by itself is a bit difficult to directly apply to the problem at hand. We will now synthesize a corollary of Theorem \ref{normthm} that will be more easily applicable for the random field Ising model. The main idea here is to approximate the discrete derivative $\Delta_i f(X)$ by a function that depends `on only a few coordinates'. We will continue to work in the setting introduced above. 

For each $1\le i\le n$, let $g_i:\mx^{n}\times \mx \to \rr$ be a measurable map. For each $i$ and each $p\ge 1$, let
\[
m_{p,i} := \|\Delta_i f(X)\|_{L^p} = (\ee|\Delta_i f(X)|^p)^{1/p}
\]
and let
\[
\ep_{p,i} := \|\Delta_i f(X)- g_i(X, X_i')\|_{L^p}. 
\]
First, we have the following generalization of Theorem \ref{normthm}.
\begin{thm}\label{normgen}
Let all notation be as above, and let $\mu$, $\nu$ and $\sigma$ be as in Theorem \ref{normthm}. Let
\[
S := \frac{1}{2}\sum_{i=1}^n \sum_{A\subseteq [n]\setminus\{i\}} \frac{g_i(X, X_i') g_i(X^A, X_i')}{n{n-1 \choose |A|}}.
\]
Then
\[
|\sigma^2-\ee(S)|\le \frac{1}{2}\sum_{i=1}^n (2\ep_{2,i}m_{2,i}+ \ep_{2,i}^2), 
\]
and 
\begin{align*}
d_{\textup{W}}(\mu, \nu) &\le \frac{1}{2\sigma^2}\sum_{i=1}^n (2\ep_{4,i}m_{4,i}+ \ep_{4,i}^2) + \frac{\sqrt{\var{S}}}{\sigma^2}  + \frac{1}{2\sigma^3}\sum_{i=1}^n m_{3,i}^3.
\end{align*}
\end{thm}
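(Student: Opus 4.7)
The plan is to piggyback on Theorem \ref{normthm} by estimating how much the quantity $S$ differs from the quantity $T$ that appears there. Since $\ee(T)=\sigma^2$, the first bound will follow from controlling $\ee|T-S|$; and since the Theorem \ref{normthm} bound involves $\sqrt{\var(\ee(T|W))}$, the second bound will follow from controlling $\|T-S\|_{L^2}$ together with the elementary fact that $\sqrt{\var(\ee(T|W))}\le\sqrt{\var(T)}\le\sqrt{\var(S)}+\|T-S\|_{L^2}$.

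The key observation that I would exploit throughout is the following symmetry: for $i\notin A$, the pair $(X^A,X_i')$ has the same joint law as $(X,X_i')$, because $X^A$ is a measurable function of coordinates $X_j$ and $X_j'$ with $j\neq i$, each of which is independent of $X_i'$. Consequently $\|\Delta_i f(X^A)\|_{L^p}=m_{p,i}$ and $\|\Delta_i f(X^A)-g_i(X^A,X_i')\|_{L^p}=\ep_{p,i}$, and by the triangle inequality $\|g_i(X^A,X_i')\|_{L^p}\le m_{p,i}+\ep_{p,i}$.

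The main estimate is pointwise in $(i,A)$. Writing $a=\Delta_i f(X)$, $c=g_i(X,X_i')$, $b=\Delta_i f(X^A)$, $d=g_i(X^A,X_i')$, I would use the split $ab-cd=(a-c)b+c(b-d)$ and Hölder's inequality to conclude
\[
\|ab-cd\|_{L^{p/2}} \le \|a-c\|_{L^p}\|b\|_{L^p} + \|c\|_{L^p}\|b-d\|_{L^p} \le 2\ep_{p,i}m_{p,i}+\ep_{p,i}^2.
\]
Using $p=2$ and the identity $\sum_{A\subseteq[n]\setminus\{i\}}\tfrac{1}{n\binom{n-1}{|A|}}=\sum_{k=0}^{n-1}\tfrac{1}{n}=1$, the sum defining $T-S$ collapses to give $\ee|T-S|\le\tfrac12\sum_i(2\ep_{2,i}m_{2,i}+\ep_{2,i}^2)$, which yields $|\sigma^2-\ee(S)|$ directly. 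Using $p=4$ in the same way gives the analogous $L^2$ bound $\|T-S\|_{L^2}\le\tfrac12\sum_i(2\ep_{4,i}m_{4,i}+\ep_{4,i}^2)$.

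Finally I would plug into Theorem \ref{normthm}: replacing $\sqrt{\var(\ee(T|W))}$ by $\sqrt{\var(S)}+\|T-S\|_{L^2}$ and recognizing $\ee|\Delta_i f(X)|^3=m_{3,i}^3$ delivers the claimed Wasserstein bound. There is no genuine obstacle here beyond bookkeeping; the only subtlety is the symmetry argument in the second paragraph, which must be invoked for every subset $A$ to ensure the $L^p$ norms behave as if $X^A$ had been replaced by $X$.
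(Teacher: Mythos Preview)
Your proposal is correct and follows essentially the same route as the paper: the same split $ab-cd=(a-c)b+c(b-d)$, H\"older to get $2\ep_{p,i}m_{p,i}+\ep_{p,i}^2$, the identity $\sum_{A\subseteq[n]\setminus\{i\}}\bigl(n\binom{n-1}{|A|}\bigr)^{-1}=1$, and then $\sqrt{\var(T)}\le\|T-\ee(S)\|_{L^2}\le\|T-S\|_{L^2}+\sqrt{\var(S)}$ combined with $\var(\ee(T\mid W))\le\var(T)$ and Theorem~\ref{normthm}. Your explicit remark that $(X^A,X_i')\stackrel{d}{=}(X,X_i')$ for $i\notin A$ is exactly the symmetry the paper uses tacitly when equating $\|\Delta_i f(X^A)\|_{L^4}$ and $\|\Delta_i f(X^A)-g_i(X^A,X_i')\|_{L^4}$ with $m_{4,i}$ and $\ep_{4,i}$.
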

\begin{proof}
For simplicity of notation, let $Y_i := g_i(X,X_i)$ and $Y_i^A := g_i(X^A, X_i')$. Note that for any $A$ and any $i\not \in A$, 
\begin{align}
&\|\Delta_i f(X)\Delta_i f(X^A) - Y_iY_i^A\|_{L^2} \nonumber\\
&\le \|(\Delta_i f(X)- Y_i)\Delta_i f(X^A)\|_{L^2}+ \|Y_i(\Delta_i f(X^A)- Y_i^A)\|_{L^2}\nonumber \\
&\le \|\Delta_i f(X)-Y_i\|_{L^4} \|\Delta_i f(X^A)\|_{L^4} + \|Y_i\|_{L^4}\|\Delta_i f(X^A)- Y_i^A\|_{L^4}\nonumber \\
&= \ep_{4,i}(m_{4,i} + \|Y_i\|_{L^4})\nonumber \\
&\le \ep_{4,i}(m_{4,i} + \|Y_i-\Delta_if(X)\|_{L^4}+ \|\Delta_i f(X)\|_{L^4})\nonumber \\
&= 2\ep_{4,i}m_{4,i}+ \ep_{4,i}^2.\label{moment}
\end{align}
Now, for each $i$,
\begin{equation}\label{identity}
\sum_{A\subseteq [n]\setminus\{i\}} \frac{1}{n{n-1 \choose |A|}} = \sum_{k=0}^{n-1}\frac{|\{A: A\subseteq [n]\setminus\{i\}, |A|=k\}|}{n{n-1 \choose k}} = 1.
\end{equation}
Therefore,
\eq{
\|T-S\|_{L^2} &\le \frac{1}{2} \sum_{i=1}^n \sum_{A\subseteq [n]\setminus\{i\}} \frac{\|\Delta_i f(X)\Delta_i f(X^A) - Y_i Y_i^A\|_{L^2}}{n{n-1 \choose |A|}}\\
&\le \frac{1}{2}\sum_{i=1}^n \sum_{A\subseteq [n]\setminus\{i\}} \frac{2\ep_{4,i}m_{4,i}+ \ep_{4,i}^2}{n{n-1 \choose |A|}}\\
&= \frac{1}{2}\sum_{i=1}^n (2\ep_{4,i}m_{4,i}+ \ep_{4,i}^2).
}
Consequently,
\eq{
\sqrt{\var(T)} &\le \|T-\ee(S)\|_{L^2}\\
&\le \|T-S\|_{L^2} + \sqrt{\var(S)}
\\
&\le \frac{1}{2}\sum_{i=1}^n (2\ep_{4,i}m_{4,i}+ \ep_{4,i}^2) + \sqrt{\var(S)}. 
}
This bound, together with Theorem \ref{normthm} and the observation that 
\[
\var(\ee(T|W))\le \var(T),
\]
gives the second inequality in the statement of Theorem \ref{normgen}. For the first inequality, recall from Theorem \ref{normthm} that $\ee(T)=\sigma^2$. Then retrace the steps in the derivation of \eqref{moment} starting with the $L^1$ norm instead of the $L^2$ norm, and  finally use the identity \eqref{identity}, to get
\eq{
|\ee(T)-\ee(S)| &\le \frac{1}{2}\sum_{i=1}^n \sum_{A\subseteq [n]\setminus\{i\}} \frac{\ee|\Delta_i f(X)\Delta_i f(X^A)- Y_i Y_i^A|}{n{n-1 \choose |A|}}\\
&\le \frac{1}{2}\sum_{i=1}^n \sum_{A\subseteq [n]\setminus\{i\}} \frac{2\ep_{2,i} m_{2,i} + \ep_{2,i}^2}{n{n-1 \choose |A|}}\\
&= \frac{1}{2}\sum_{i=1}^n (2\ep_{2,i} m_{2,i} + \ep_{2,i}^2). 
}
This completes the proof of the theorem. 
\end{proof}
Theorem \ref{normgen} can be useful only when it is easier to understand $\var(S)$ than $\var(T)$. The following result gives such a criterion.
\begin{prop}\label{giprop}
Let $g_i$ and $S$ be as in Theorem \ref{normgen}. Suppose that for each $i$, there is a set $N_i\subseteq [n]$ such that $g_i(x, x_i')$ is a function of only $(x_j)_{j\in N_i}$ and $x_i'$. Then
\eq{
\var(S) &\le \frac{1}{4} \sum_{i,j: N_i\cap N_j \ne \emptyset} (m_{4,i} + \ep_{4,i})^2(m_{4,j}+ \ep_{4,j})^2.
}
\end{prop}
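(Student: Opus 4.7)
The plan is to decompose $S$ into contributions from each index $i$ and exploit the locality assumption so that most covariances vanish. Write $S = \frac{1}{2}\sum_{i=1}^n S_i$, where
\[
S_i := \sum_{A\subseteq [n]\setminus\{i\}} \frac{g_i(X, X_i')\, g_i(X^A, X_i')}{n\binom{n-1}{|A|}},
\]
so that $\var(S) = \frac{1}{4}\sum_{i,j}\cov(S_i, S_j)$. The first key observation is that $S_i$ is measurable with respect to only the variables indexed by $N_i\cup\{i\}$: since $i\notin A$, coordinate $i$ of $X^A$ is always $X_i$, while the only primed variables that enter $S_i$ are $X_i'$ (from the explicit second argument) and the $X_k'$ with $k\in N_i$ (via the swap defining $X^A$). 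Under the natural convention $i\in N_i$, this reduces to $S_i$ being a function of $(X_k, X_k')_{k\in N_i}$. Since these coordinates are independent across indices, $\cov(S_i, S_j) = 0$ as soon as $N_i\cap N_j=\emptyset$.

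For the remaining pairs I would apply the Cauchy--Schwarz inequality for covariances, $|\cov(S_i, S_j)|\le \|S_i\|_{L^2}\|S_j\|_{L^2}$, reducing everything to a uniform bound on $\|S_i\|_{L^2}$. To obtain such a bound I would pull the $L^2$ norm inside the sum using the triangle inequality, then bound each summand via Cauchy--Schwarz as
\[
\|g_i(X, X_i')\, g_i(X^A, X_i')\|_{L^2}\le \|g_i(X, X_i')\|_{L^4}\,\|g_i(X^A, X_i')\|_{L^4}.
\]
Since $X^A$ has the same joint law as $X$, both $L^4$ factors equal $\|g_i(X, X_i')\|_{L^4}$, and the weighted sum over $A$ collapses to $1$ by identity \eqref{identity}, yielding $\|S_i\|_{L^2}\le \|g_i(X, X_i')\|_{L^4}^2$. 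A final triangle inequality, together with the definitions of $m_{4,i}$ and $\ep_{4,i}$, gives $\|g_i(X, X_i')\|_{L^4}\le m_{4,i}+\ep_{4,i}$, and the stated bound follows by summing $\|S_i\|_{L^2}\|S_j\|_{L^2}\le (m_{4,i}+\ep_{4,i})^2(m_{4,j}+\ep_{4,j})^2$ over the nonzero-covariance pairs.

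The proof is essentially mechanical, so there is no substantial obstacle; the only genuinely delicate point is the bookkeeping of which variables enter each $S_i$, and in particular the observation that (under the natural convention $i\in N_i$) the explicit primed argument $X_i'$ does not introduce entanglements with $S_j$ beyond those already forced by $N_i\cap N_j$. Were $i$ allowed to lie outside $N_i$, one would instead have to sum over pairs with $(N_i\cup\{i\})\cap (N_j\cup\{j\})\ne\emptyset$, but the argument would otherwise be identical.
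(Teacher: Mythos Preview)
Your proof is correct and essentially the same as the paper's: the paper expands $\var(S)$ into a full quadruple sum over $(i,A,j,B)$ and bounds each $\cov(Y_iY_i^A,\,Y_jY_j^B)$ termwise, while you aggregate over $A$ first into $S_i$ and bound $\cov(S_i,S_j)$; both routes use the same locality-forces-zero-covariance observation, the same Cauchy--Schwarz / $L^4$ bound, and identity~\eqref{identity} to collapse the weights. Your remark about the convention $i\in N_i$ is apt---the paper leaves it implicit, though it holds in the application since $N_i=\Lambda_{i,k}\cap\Lambda\ni i$.
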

\begin{proof}
Let $Y_i$ and $Y_i^A$ be as in the proof of Theorem \ref{normgen}. Note that
\eq{
\var(S) &= \frac{1}{4}\sum_{i,j=1}^n \sum_{\substack{A\subseteq [n]\setminus \{i\},\\ B \subseteq [n]\setminus \{j\}}}\frac{\cov(Y_iY_i^A, \, Y_j Y_j^B)}{n^2{n-1\choose |A|} {n-1\choose |B|}}. 
}
By independence of coordinates, whenever $N_i\cap N_j = \emptyset$, 
\[
\cov(Y_i Y_i^A,\, Y_jY_j^B) = 0.
\]
Moreover, for any $i$, $j$, $A$ and $B$,
\eq{
|\cov(Y_i Y_i^A,\, Y_jY_j^B)|&\le \|Y_i Y_i^A\|_{L^2} \|Y_j Y_j^B\|_{L^2}\\
&\le \|Y_i\|_{L^4}^2\|Y_j\|_{L^4}^2\\
&\le (m_{4,i} + \ep_{4,i})^2(m_{4,j}+ \ep_{4,j})^2. 
}
By \eqref{identity}, this completes the proof. 
\end{proof}
Combining Theorem \ref{normgen} and Proposition \ref{giprop}, we get the following result. This is our main tool for proving Theorem \ref{finite}. 
\begin{thm}\label{normcomb}
Let all notation be as in Theorem \ref{normgen}. Let $N_i$ be as in Proposition \ref{giprop}. Then 
\[
|\sigma^2-\ee(S)|\le \frac{1}{2}\sum_{i=1}^n (2\ep_{2,i}m_{2,i}+ \ep_{2,i}^2), 
\]
and 
\begin{align*}
d_{\textup{W}}(\mu, \nu) &\le \frac{1}{2\sigma^2}\sum_{i=1}^n (2\ep_{4,i}m_{4,i}+ \ep_{4,i}^2) \\
&\qquad + \frac{1}{2\sigma^2}\biggl( \sum_{i,j: N_i\cap N_j \ne \emptyset} (m_{4,i} + \ep_{4,i})^2(m_{4,j}+ \ep_{4,j})^2\biggr)^{1/2} \\
&\qquad  + \frac{1}{2\sigma^3}\sum_{i=1}^n m_{3,i}^3.
\end{align*}
\end{thm}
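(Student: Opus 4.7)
The plan is to derive Theorem \ref{normcomb} as a direct amalgamation of Theorem \ref{normgen} and Proposition \ref{giprop}, with essentially no new ingredients. The first inequality, bounding $|\sigma^2 - \ee(S)|$, is literally the first inequality of Theorem \ref{normgen}; I would simply quote it, since the locality hypothesis on the $g_i$ plays no role in that part of the statement.

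For the Wasserstein bound, I would begin with the three-term bound
\[
d_{\textup{W}}(\mu,\nu) \le \frac{1}{2\sigma^2}\sum_{i=1}^n (2\ep_{4,i}m_{4,i}+ \ep_{4,i}^2) + \frac{\sqrt{\var(S)}}{\sigma^2} + \frac{1}{2\sigma^3}\sum_{i=1}^n m_{3,i}^3
\]
supplied by Theorem \ref{normgen}, and replace the middle term using Proposition \ref{giprop}. The first and third terms appear verbatim in the conclusion, so no work is needed on those. Proposition \ref{giprop} gives
\[
\var(S) \le \frac{1}{4}\sum_{i,j:\,N_i\cap N_j \ne \emptyset}(m_{4,i}+\ep_{4,i})^2(m_{4,j}+\ep_{4,j})^2,
\]
so taking the square root produces the factor $\tfrac{1}{2}\bigl(\sum_{i,j:N_i\cap N_j\ne\emptyset}(m_{4,i}+\ep_{4,i})^2(m_{4,j}+\ep_{4,j})^2\bigr)^{1/2}$, which divided by $\sigma^2$ is precisely the middle term in the claimed bound.

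Substituting this replacement term and keeping the other two terms unchanged gives the stated inequality. There is no substantive obstacle here, since Theorem \ref{normcomb} is by design a convenient packaging of the two preceding results into a single statement tailored for the locality structure that will arise in the RFIM application; the only thing to check is that the constants $\tfrac{1}{2}$ produced by squaring out $\tfrac{1}{4}$ inside the square root match the constants in the statement, which they do.
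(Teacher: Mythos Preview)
Your proposal is correct and matches the paper's approach exactly; the paper itself presents Theorem \ref{normcomb} as an immediate combination of Theorem \ref{normgen} and Proposition \ref{giprop} without giving a separate proof, and your write-up spells out precisely that combination with the constants checked.
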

Theorem \ref{normcomb} will be used in Section \ref{finiteproof} to prove Theorem \ref{finite}. However, I have not been able to use Theorem \ref{normcomb} to prove Theorem \ref{infinite} (the CLT for the ground state energy). Instead, a `continuous version' of Theorem \ref{normcomb} will be used to prove Theorem \ref{infinite}. This is presented as Theorem~\ref{normcont} below.

Let $f:\rr^n \to \rr$ be a differentiable function.  Let $\partial_i f$ denote the partial derivative of $f$ in the $i^{\textup{th}}$ coordinate, and let $\nabla f = (\partial_1 f,\ldots, \partial_n f)$ be the gradient of $f$. Let $Z = (Z_1,\ldots,Z_n)$ be a vector of i.i.d.\ standard normal random variables. The main ingredient in the proof of Theorem~\ref{normcont} is the following lemma, which is a slightly modified version of Lemma 5.3 from \citet{cha09a}. Recall that the total variation distance between two probability measures $\mu$ and $\nu$ on the real line is defined as
\[
\tv(\mu,\nu) := \sup_A|\mu(A)-\nu(A)|,
\]
where the supremum is taken over all Borel subsets of $\rr$. 
\begin{lmm}[\cite{cha09a}]\label{gaussian}
Let $f$ and $Z$ be as in the above paragraph and let $W:=f(Z)$. Assume that $\|f(Z)\|_{L^4}<\infty$ and $\|\partial_i f(Z)\|_{L^4}<\infty$  for all $i$. Let $\sigma^2:=\var(W)$. Let $Z'$ be an independent copy of $Z$, and let
\[
T := \int_0^1 \frac{1}{2\st} \nabla f(Z)\cdot \nabla f (\st Z + \sst Z') dt.
\]
Let $\mu$ be the law of $(W-\ee(W))/\sigma$ and $\nu$ be the standard normal distribution. Then $\ee(T)=\sigma^2$ and  
\[
\tv(\mu,\nu) \le  \frac{2\sqrt{\var(T)}}{\sigma^2}.
\]
\end{lmm}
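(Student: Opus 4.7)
The approach combines Stein's method for the standard normal distribution with a Gaussian interpolation identity. The key step is to establish that, for sufficiently nice $f,g:\rr^n\to\rr$,
\[
\cov(f(Z), g(Z)) = \ee\biggl[\int_0^1 \frac{1}{2\st}\,\nabla g(Z)\cdot \nabla f(\st Z + \sst Z')\, dt\biggr].
\]
Specializing to $g=f$ gives $\ee(T) = \sigma^2$ at once. The total variation bound then follows by applying this identity with $g$ equal to the Stein solution composed with $(W-\ee W)/\sigma$ and using $\|\phi'\|_\infty \le 2$.

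To derive the covariance identity, I would introduce the Ornstein--Uhlenbeck semigroup $P_s f(z) := \ee[f(e^{-s}z + \sqrt{1-e^{-2s}}Z')]$, whose generator $L = \Delta - z\cdot\nabla$ satisfies $\frac{d}{ds}P_s f = LP_s f$, together with the gradient intertwining $\nabla P_s f = e^{-s}P_s(\nabla f)$. Since $P_s f(Z)\to \ee[f(Z)]$ as $s\to\infty$, we have $f(Z) - \ee[f(Z)] = -\int_0^\infty LP_s f(Z)\, ds$. Multiplying by $g(Z)$, taking expectation, and using the Gaussian integration-by-parts identity $\ee[g(Z)Lh(Z)] = -\ee[\nabla g(Z)\cdot \nabla h(Z)]$ gives
\[
\cov(f(Z), g(Z)) = \int_0^\infty e^{-s}\,\ee[\nabla g(Z)\cdot \nabla f(e^{-s}Z + \sqrt{1-e^{-2s}}Z')]\, ds,
\]
and the substitution $t = e^{-2s}$ (so $e^{-s}=\st$ and $e^{-s}\,ds = (2\st)^{-1}dt$ after reversing orientation) converts this into the displayed identity.

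For the total variation bound, fix a Borel set $A\subseteq \rr$ and let $\phi$ be the bounded solution of the Stein equation $\phi'(x) - x\phi(x) = \mathbf{1}_A(x) - \nu(A)$; the classical estimates give $\|\phi'\|_\infty \le 2$. Write $\tilde W := (W - \ee W)/\sigma$ and apply the covariance identity with $g(z) := \phi((f(z)-\ee[f(Z)])/\sigma)$, whose gradient at $Z$ equals $\sigma^{-1}\phi'(\tilde W)\nabla f(Z)$. Noting that $\cov(f(Z), g(Z)) = \ee[(f(Z)-\ee[f(Z)])\phi(\tilde W)] = \sigma\,\ee[\tilde W\phi(\tilde W)]$, the identity reads
\[
\sigma\,\ee[\tilde W\phi(\tilde W)] = \sigma^{-1}\ee[\phi'(\tilde W)\,T],
\]
so $\ee[\tilde W\phi(\tilde W)] = \sigma^{-2}\ee[\phi'(\tilde W)\,T]$. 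Combined with $\ee(T) = \sigma^2$ this gives
\[
\mu(A) - \nu(A) = \ee[\phi'(\tilde W) - \tilde W\phi(\tilde W)] = \sigma^{-2}\ee[\phi'(\tilde W)(\sigma^2 - T)],
\]
and Cauchy--Schwarz with $\|\phi'\|_\infty \le 2$ yields $|\mu(A)-\nu(A)| \le 2\sigma^{-2}\sqrt{\var(T)}$. Taking the supremum over $A$ produces the claimed inequality.

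The hard part will be justifying the smoothness assumptions built into the formal manipulations above: $\phi$ is merely Lipschitz (not $C^1$), $f$ is assumed only differentiable, and the integrand defining $T$ carries an integrable but unbounded factor $(2\st)^{-1}$ near $t=0$. I would handle this in the standard way, by first mollifying $f$ and $\phi$ by convolution with Gaussian densities of shrinking variance, applying the identities to the smooth approximants (where OU calculus and Gaussian integration-by-parts are routinely valid), and then passing to the limit using the $L^4$ hypotheses on $f(Z)$ and $\partial_i f(Z)$ to control all error terms. The singularity at $t=0$ causes no trouble because Cauchy--Schwarz bounds $\ee|\nabla f(Z)\cdot \nabla f(\st Z+\sst Z')|$ uniformly in $t$ by $\|\nabla f(Z)\|_{L^2}^2$, and $\int_0^1 t^{-1/2}dt = 2$ is finite; the stronger $L^4$ assumption then leaves ample room to justify every exchange of limit, expectation, and integral.
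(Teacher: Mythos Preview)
The paper does not supply its own proof of this lemma; it is quoted as a slight modification of Lemma~5.3 of \cite{cha09a}, and the reader is referred there. Your argument is correct and is essentially the standard one used in that reference: derive the covariance/interpolation identity via the Ornstein--Uhlenbeck semigroup, specialize to $g=f$ to get $\ee(T)=\sigma^2$, then feed the identity into Stein's equation for indicator test functions and use $\|\phi'\|_\infty\le 2$ together with $\ee(T)=\sigma^2$ to bound the total variation distance, with mollification to justify the formal steps.
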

The above lemma is the starting point for the method of `second order Poincar\'e inequalities' developed in \cite{cha09a}. For proving the CLT for the ground state energy of the RFIM, however, I could not construct a proof using  second order Poincar\'e inequalities. Instead, the above lemma needs to be used in a different way, more along the lines of Theorem \ref{normcomb}. 

For each $1\le i\le n$, let $g_i:\rr^n \to \rr$ be a measurable function and let $N_i$ be a set of coordinates such that the value of $g_i(x_1,\ldots, x_n)$ is determined by $(x_j)_{j\in N_i}$. Suppose that $\|g_i(Z)\|_{L^4}<\infty$ for all $i$. For each $1\le i\le n$ and $p\ge 1$, let
\begin{equation}\label{mpdef}
m_{p,i} := \|\partial_i f(Z)\|_{L^p}
\end{equation}
and
\begin{equation}\label{epdef}
\ep_{p,i} := \|\partial_i f(Z)-g_i(Z)\|_{L^p}.
\end{equation}
Let $g:\rr^n\to \rr^n$ be the function whose $i^{\mathrm{th}}$ coordinate map is $g_i$.  For $0\le t\le 1$, let
\[
Z^t := \st Z +\sst Z',
\]
and let 
\begin{equation}\label{sdef}
S:= \int_0^1 \frac{1}{2\st}g(Z)\cdot g(Z^t)dt. 
\end{equation}
The following theorem gives a continuous analog of Theorem \ref{normcomb}, in the setting of Lemma \ref{gaussian}.
\begin{thm}\label{normcont}
Let $\ep_{p,i}$ and $m_{p,i}$ be defined as above and let all other variables be defined as in Lemma \ref{gaussian}. Then 
\[
|\sigma^2-\ee(S)|\le \sum_{i=1}^n (2\ep_{2,i}m_{2,i}+\ep_{2,i}^2),
\]
and
\begin{align*}
\tv(\mu,\nu) &\le \frac{2}{\sigma^2}\sum_{i=1}^n (2\ep_{4,i}m_{4,i}+\ep_{4,i}^2) \\
&\qquad + \frac{2}{\sigma^2}\biggl(\sum_{i,j: N_i \cap N_j\ne \emptyset} (m_{4.i}+\ep_{4,i})^2(m_{4,j}+\ep_{4,j})^2\biggr)^{1/2}.
\end{align*}
\end{thm}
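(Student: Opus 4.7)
The plan is to mirror the proof of Theorem \ref{normgen} combined with Proposition \ref{giprop}, but with the continuous Gaussian interpolation from Lemma \ref{gaussian} playing the role of the discrete resampling expansion. First I would introduce the proxy
\[
T := \int_0^1 \frac{1}{2\sqrt{t}}\, \nabla f(Z)\cdot \nabla f(Z^t)\, dt,
\]
so that Lemma \ref{gaussian} immediately supplies $\ee(T) = \sigma^2$ and $\tv(\mu,\nu)\le 2\sqrt{\var(T)}/\sigma^2$. The whole argument then reduces to (a) controlling $|\ee(T)-\ee(S)|$, and (b) controlling $\var(T)$ via $\var(S)$ together with $\|T-S\|_{L^2}$.

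Next I would bound the integrands pointwise in $t$ exactly as in the derivation of \eqref{moment}. Writing
\[
\partial_i f(Z)\,\partial_i f(Z^t) - g_i(Z)\,g_i(Z^t) = (\partial_i f(Z)-g_i(Z))\,\partial_i f(Z^t) + g_i(Z)\,(\partial_i f(Z^t)-g_i(Z^t)),
\]
and using that $Z^t$ has the same law as $Z$ together with Cauchy--Schwarz in $L^2$ (respectively $L^1$), I would obtain
\[
\|\partial_i f(Z)\,\partial_i f(Z^t) - g_i(Z)\,g_i(Z^t)\|_{L^2} \le 2\ep_{4,i}m_{4,i}+\ep_{4,i}^2,
\]
together with the analogous $L^1$ bound $\ee|\partial_i f(Z)\,\partial_i f(Z^t) - g_i(Z)\,g_i(Z^t)| \le 2\ep_{2,i}m_{2,i}+\ep_{2,i}^2$. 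Summing over $i$ and integrating against $\frac{1}{2\sqrt{t}}\,dt$, which has total mass $1$ on $[0,1]$, yields $\|T-S\|_{L^2}\le \sum_i(2\ep_{4,i}m_{4,i}+\ep_{4,i}^2)$ and $\ee|T-S|\le \sum_i(2\ep_{2,i}m_{2,i}+\ep_{2,i}^2)$. The latter, combined with $\ee(T)=\sigma^2$, gives the first inequality of the theorem.

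To bound $\var(S)$ I would argue in the spirit of Proposition \ref{giprop}. Expanding,
\[
\var(S) = \int_0^1\int_0^1 \frac{1}{4\sqrt{st}}\sum_{i,j=1}^n \cov\bigl(g_i(Z)g_i(Z^s),\, g_j(Z)g_j(Z^t)\bigr)\, ds\, dt.
\]
The decisive observation is that because $Z^s_k=\sqrt{s}Z_k+\sqrt{1-s}Z'_k$ is coordinate-wise, the pair $(g_i(Z), g_i(Z^s))$ is a measurable function of $\{Z_k, Z'_k : k\in N_i\}$ alone, so whenever $N_i\cap N_j=\emptyset$ the independence of Gaussian coordinates forces the covariance to vanish. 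For the surviving terms, Cauchy--Schwarz together with $Z^s,Z^t\stackrel{\dm}{=}Z$ give
\[
|\cov(g_i(Z)g_i(Z^s),\, g_j(Z)g_j(Z^t))| \le \|g_i(Z)\|_{L^4}^2\|g_j(Z)\|_{L^4}^2 \le (m_{4,i}+\ep_{4,i})^2(m_{4,j}+\ep_{4,j})^2,
\]
and $\int_0^1\int_0^1 \frac{ds\,dt}{4\sqrt{st}}=1$, so $\sqrt{\var(S)}$ is dominated by the square root appearing in the statement.

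Assembling via $\sqrt{\var(T)}\le \|T-\ee(S)\|_{L^2}\le \|T-S\|_{L^2}+\sqrt{\var(S)}$ (the first step because adding a constant to the centering can only increase the $L^2$ norm, the second by triangle inequality) and then multiplying by $2/\sigma^2$, I would obtain the second inequality. I do not anticipate any serious obstacle; the main subtlety is the independence argument that makes the off-diagonal covariance vanish, which hinges on $Z^s$ sharing the coordinate structure of $Z$, and on the integrability of the singularity $1/\sqrt{t}$ at the origin.
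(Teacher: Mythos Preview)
Your proposal is correct and follows essentially the same route as the paper: bound $\|T-S\|_{L^2}$ (and $\|T-S\|_{L^1}$) via Minkowski on the integral and the product-splitting estimate, then pass from $\var(T)$ to $\var(S)$ by $\sqrt{\var(T)}\le \|T-S\|_{L^2}+\sqrt{\var(S)}$, and bound $\var(S)$ using the coordinate-wise independence when $N_i\cap N_j=\emptyset$. The only cosmetic difference is that the paper applies Jensen's inequality to reduce $\var(S)$ to a single $t$-integral before expanding in covariances, whereas you write out the double integral in $s,t$ directly; both routes yield the identical bound since the covariance estimate is uniform in $s,t$ and $\int_0^1\!\!\int_0^1 (4\sqrt{st})^{-1}\,ds\,dt=1$.
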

\begin{proof}
Note that
\begin{align*}
\|T-S\|_{L^2} &\le \int_0^1 \frac{1}{2\st}\|\nabla f (Z)\cdot \nabla f(Z^t) - g(Z)\cdot g(Z^t)\|_{L^2}dt.
\end{align*}
But for any $t$, 
\begin{align*}
&\|\nabla f (Z)\cdot \nabla f(Z^t) - g(Z)\cdot g(Z^t)\|_{L^2}\\
&\le \|(\nabla f(Z) - g(Z))\cdot\nabla f(Z^t)\|_{L^2} + \|g(Z)\cdot(\nabla f(Z^t) - g(Z^t))\|_{L^2}\\
&\le \sum_{i=1}^n (\|(\partial_i f(Z) - g_i(Z))\partial_i f(Z^t)\|_{L^2}+ \|g_i(Z)(\partial_i f(Z^t) - g_i(Z^t))\|_{L^2})\\
&\le \sum_{i=1}^n (\|\partial_i f(Z) - g_i(Z)\|_{L^4}\|\partial_i f(Z^t)\|_{L^4}+ \|g_i(Z)\|_{L^4}\|\partial_i f(Z^t) - g_i(Z^t)\|_{L^4})\\
&= \sum_{i=1}^n (\ep_{4,i}m_{4,i}+\|g_i(Z)\|_{L^4}\ep_{4,i})\\
&\le \sum_{i=1}^n (\ep_{4,i}m_{4,i}+(m_{4,i}+\ep_{4,i})\ep_{4,i}).
\end{align*}
Thus,
\begin{equation}\label{teq}
\|T-S\|_{L^2}\le \sum_{i=1}^n (2\ep_{4,i}m_{4,i}+\ep_{4,i}^2). 
\end{equation}
On the other hand, 
\begin{align}
\sqrt{\var(T)} &\le \|T-\ee(S)\|_{L^2}\nonumber\\
&\le \|T-S\|_{L^2} + \sqrt{\var(S)}. \label{vv}
\end{align}
By Jensen's inequality,
\begin{align*}
\var(S) &= \ee\biggl(\int_0^1\frac{1}{2\st} (g(Z)\cdot g(Z^t)-\ee(g(Z)\cdot g(Z^t))) dt\biggr)^2\\
&\le \int_0^1\frac{1}{2\st} \var(g(Z)\cdot g(Z^t))dt\\
&= \int_0^1 \frac{1}{2\st} \sum_{i,j=1}^n \cov(g_i(Z)g_i(Z^t), \,g_j(Z)g_j(Z^t))dt. 
\end{align*}
Now note that if $N_i\cap N_j=\emptyset$, then 
\[
\cov(g_i(Z)g_i(Z^t), \,g_j(Z)g_j(Z^t))=0,
\]
and for any $i$ and $j$, 
\eq{
\cov(g_i(Z)g_i(Z^t), \,g_j(Z)g_j(Z^t)) &\le \|g_i(Z)g_i(Z^t)\|_{L^2}\|g_j(Z)g_j(Z^t)\|_{L^2}\\
&\le \|g_i(Z)\|_{L^4}^2\|g_j(Z)\|_{L^4}^2\\
&\le (m_{4,i}+\ep_{4,i})^2(m_{4,j}+\ep_{4,j})^2.  
}
This shows that 
\eq{
\var(S) \le \sum_{i,j: N_i \cap N_j\ne \emptyset} (m_{4.i}+\ep_{4,i})^2(m_{4,j}+\ep_{4,j})^2.
}
Combining this with \eqref{teq}, \eqref{vv} and Lemma \ref{gaussian}, we get the desired bound on $\tv(\mu,\nu)$. For the bound on $|\sigma^2-\ee(S)|$, we proceed as in the proof of~\eqref{teq} to obtain a bound on $\|T-S\|_{L^1}$,  and then use Lemma \ref{gaussian} for the identity $\sigma^2=\ee(T)$.
\end{proof}
Theorem \ref{normcont} will be used to prove Theorem \ref{infinite} in Section \ref{infiniteproof}. In that proof, $f$ will be the ground state energy of the RFIM on a finite set, considered as a function of the random field. However,  it is not a differentiable function of the random field. To take care of this issue, we need to extend Theorem~\ref{normcont} to the slightly larger class of functions. 
\begin{prop}\label{normprop}
For each $k$, let  $f_k:\rr^n \to\rr$ be a differentiable function. Suppose that $f(x)=\lim_{k\to\infty} f_k(x)$ exists almost everywhere. Further, assume that for each $i$, $\lim_{k\to\infty} \partial_i f_k(x)$ exists almost everywhere, and call the limit $\partial_i f(x)$. Lastly, suppose that for some $\ep>0$,
\begin{equation}\label{unifint}
\sup_k\|f_k(Z)\|_{L^{4+\ep}}<\infty \ \text{ and } \ \sup_{i,k}\|\partial_i f_k(Z)\|_{L^{4+\ep}}<\infty,
\end{equation}
where $Z = (Z_1,\ldots,Z_n)$ is a vector of i.i.d.~standard normal random variables.
Take any $g_1,\ldots, g_n$ as in the paragraph preceding the statement of Theorem~\ref{normcont}, and define $m_{p,i}$ and $\ep_{p,i}$ as in \eqref{mpdef} and \eqref{epdef}, assuming that $\|g_i(Z)\|_{L^{4+\ep}}<\infty$ for each $i$. Then the conclusions of Theorem~\ref{normcont} hold for the function $f$, treating $\partial_i f$ as its derivative in the $i^{th}$ coordinate. 
\end{prop}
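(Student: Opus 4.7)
The plan is to apply Theorem \ref{normcont} to each smooth $f_k$ and pass to the limit $k\to\infty$. For each $k$, write $\sigma_k^2:=\var(f_k(Z))$, let $\mu_k$ be the law of $(f_k(Z)-\ee f_k(Z))/\sigma_k$, and let $m_{p,i}^{(k)}$, $\ep_{p,i}^{(k)}$ denote the quantities \eqref{mpdef}--\eqref{epdef} with $\partial_i f$ replaced by $\partial_i f_k$. Crucially, the quantity $S$ from \eqref{sdef} depends only on the fixed maps $g_1,\ldots,g_n$, so $\ee(S)$ does not change with $k$; only $\sigma_k^2$ and the moment and error quantities vary on the right-hand side of Theorem \ref{normcont} applied to each $f_k$.

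The first step is to check convergence of the right-hand sides. The de la Vall\'ee-Poussin criterion and the uniform bound \eqref{unifint} imply that $\{|\partial_i f_k(Z)|^4\}_k$ and $\{|f_k(Z)|^2\}_k$ are uniformly integrable. Combined with the a.e.\ convergences $\partial_i f_k\to\partial_i f$ and $f_k\to f$, Vitali's theorem upgrades these to $L^4$ and $L^2$ convergences of $\partial_i f_k(Z)$ and $f_k(Z)$ respectively. Since $g_i(Z)$ is fixed in $k$, it follows that $m_{p,i}^{(k)}\to m_{p,i}$ and $\ep_{p,i}^{(k)}\to\ep_{p,i}$ for $p\in\{2,4\}$, and $\sigma_k^2\to\sigma^2$. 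We may assume $\sigma>0$, since otherwise the total variation bound is vacuous.

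It remains to pass the left-hand sides through the limit. For $|\sigma^2-\ee(S)|$ there is nothing to do once $\sigma_k^2\to\sigma^2$ and the bound on the right converges. For the total variation bound, the $L^2$ convergence of $f_k(Z)$ together with $\sigma_k\to\sigma>0$ shows that $\mu_k\to\mu$ weakly. The key (and only slightly delicate) observation is that $\tv(\cdot,\nu)$ is lower semicontinuous under weak convergence: by Lusin's theorem the supremum defining $\tv(\mu,\nu)$ can be restricted to continuous $h$ with $\|h\|_\infty\le 1$, and for each such $h$ one has $|\int h\,d\mu-\int h\,d\nu|=\lim_k|\int h\,d\mu_k-\int h\,d\nu|\le 2\liminf_k\tv(\mu_k,\nu)$, whence $\tv(\mu,\nu)\le\liminf_k\tv(\mu_k,\nu)$. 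Combining this with the convergence of the right-hand side from Theorem \ref{normcont} applied to each $f_k$ yields the claim. The main obstacle is this lower semicontinuity argument for total variation; the remaining ingredients amount to a routine application of uniform integrability.
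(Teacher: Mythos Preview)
Your proof is correct and follows the same overall strategy as the paper: apply Theorem \ref{normcont} to each $f_k$, use the uniform $L^{4+\ep}$ bound to upgrade a.e.\ convergence to $L^p$ convergence of the relevant quantities, and then pass to the limit, the only nontrivial step being the lower semicontinuity of $\tv(\cdot,\nu)$ under weak convergence.

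The one genuine difference is in how that lower semicontinuity is established. The paper uses the coupling characterization: for each $k$ there is a coupling $\gamma_k$ on $\rr^2$ with marginals $\mu_k$ and $\nu$ such that $\gamma_k(\{x\neq y\})=\tv(\mu_k,\nu)$; tightness of $\{\gamma_k\}$ yields a subsequential weak limit $\gamma$ with marginals $\mu$ and $\nu$, and since $\{x\neq y\}$ is open the portmanteau theorem gives $\tv(\mu,\nu)\le\gamma(\{x\neq y\})\le\liminf\gamma_{k_j}(\{x\neq y\})$. Your route---Lusin's theorem applied to $\mu+\nu$ to reduce the supremum in $\tv$ to continuous test functions, followed by weak convergence---is equally short and valid. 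Both arguments exploit that probability measures on $\rr$ are Radon; neither requires any regularity of $\mu$ beyond that. Your version is perhaps slightly more self-contained, while the paper's coupling argument makes the role of weak convergence more transparent.
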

\begin{proof}
Let $W_k := f_k(Z)$, $\sigma_k^2:= \var(W_k)$, and $\mu_k$ be the law of $(W_k-\ee(W_k))/\sigma_k$. Let $S$ be defined as in \eqref{sdef}. Let $\nu$ be the standard normal distribution. Then Theorem \ref{normcont} gives upper bounds on $|\sigma_k^2-\ee(S)|$ and $\tv(\mu_k, \nu)$ in terms of the $L^2$ and $L^4$ norms of $\partial_i f_k(Z)$ and $\partial_i f_k(Z) - g_i(Z)$. As $k\to \infty$, the a.e.~convergence of $\partial_i f_k$ to $\partial_i f$ and the condition \eqref{unifint} ensure that these norms converge to the corresponding norms of $\partial_i f(Z)$ and $\partial_i f(Z) - g_i(Z)$. This immediately implies the validity of the first inequality of Theorem \ref{normcont} for the function $f$.  

Next, note that the a.e.~convergence of $f_k$ to $f$ and the condition~\eqref{unifint} ensure that $(W_k - \ee(W_k))/\sigma_k$ converges almost surely to $(W-\ee(W))/\sigma$ as $k\to \infty$. This implies that $\mu_k$ converges to $\mu$ weakly. By the well-known coupling characterization of total variation distance, for each $k$ there exists a probability measure $\gamma_k$ on $\rr^2$ whose one-dimensional  marginals are $\mu_k$ and $\nu$, and 
\[
\gamma_k(V)=\tv(\mu_k, \nu),
\]
where
\[
V := \{(x,y)\in \rr^2: x\ne y\}. 
\]
Since $\mu_k$ converges weakly to $\mu$, it follows that the sequence $\{\gamma_k\}_{k\ge 1}$ is a tight family of probability measures on $\rr^2$. Let $\{\gamma_{k_j}\}_{j\ge 1}$ be a subsequence converging to a limit $\gamma$. Then $\gamma$ has marginals $\mu$ and $\nu$. Moreover, since $V$ is an open set,
\eq{
\tv(\mu,\nu)\le \gamma(V) &\le \liminf_{j\to \infty} \gamma_{k_j}(V) \\
&= \liminf_{j\to\infty} \tv(\mu_{k_j}, \nu).  
}
This completes the proof of the proposition.
\end{proof}

\section{Proof of Theorem \ref{finite}}\label{finiteproof}
In this proof, $C$ will denote any positive constant that depends only on $\beta$, $d$ and the random field distribution. The value of $C$ may change from line to line or even within a line. 

We will prove the result under the plus boundary condition only, since the argument for the minus boundary condition is the same. Fix an inverse temperature $\beta$. Let $\si_{\Lambda, \gamma}$ denote the expected value of the spin at site $i$ under the RFIM on $\Lambda$ with boundary condition $\gamma$, at inverse temperature $\beta$. By the FKG property of the random field Ising model, it is a standard fact that for any $\Lambda$ and any $i\in \Lambda$,  $\si_{\Lambda, \gamma}$ is a monotone increasing function of the boundary condition $\gamma$. From this and the Markovian nature of the model, it follows that $\si_{\Lambda, +} \ge \si_{\Lambda',+}$ whenever $i\in \Lambda \subseteq \Lambda'$. 

Take any $i\in \zz^d$. For each $k$, let $\Lambda_{i,k}$ be the cube of side-length $2k+1$ centered at $i$. Then the above inequality shows that the limit
\[
\si_+ := \lim_{k\to \infty} \si_{\Lambda_{i,k}, +}
\]
exists. Therefore, if we let
\begin{equation}\label{deltadef}
\delta_k := \ee|\si_{\Lambda_{i,k},+} - \si_+|,
\end{equation}
then by translation-invariance, $\delta_k$ depends only on $k$ and not on $i$, and 
\[
\lim_{k\to \infty} \delta_k = 0. 
\]
(Note that the absolute value in \eqref{deltadef} is unnecessary, since the random variable inside is nonnegative. But we keep it anyway, to emphasize the point that $\si_{\Lambda_{i,k},+} \approx \si_+$ with high probability when $k$ is large.) Moreover, given any $k$ and $\Lambda$ such that $\Lambda_{i,k}\subseteq\Lambda$, 
\[
\si_+ \le \si_{\Lambda, +}\le \si_{\Lambda_{i,k},+}. 
\]
Consequently,
\begin{equation}\label{lambdadiff}
\ee|\si_{\Lambda, +} - \si_{\Lambda_{i,k},+}|\le \delta_k.
\end{equation}
Now take any nonempty set $\Lambda\subseteq \zz^d$. Fix $\beta$ and let $F$ be the free energy of the RFIM on $\Lambda$ with plus boundary condition, at inverse temperature $\beta$. Consider $F$ as a function of the random field $(\phi_i)_{i\in \Lambda}$, and let $\Delta_i F$ be the change in the value of $F$ when $\phi_i$ is replaced by an independent copy $\phi_i'$, as in Theorem \ref{normthm}. Let
\[
\alpha_i := \beta (\phi_i'-\phi_i).
\]
Then note that
\eq{
\Delta_i F &=-\frac{1}{\beta} \log \smallavg{e^{\alpha_i\sigma_i}}_{\Lambda, +}\\
&= -\frac{1}{\beta}\log \smallavg{\cosh\alpha_i + \sigma_i\sinh\alpha_i}_{\Lambda, +}\\
&= -\frac{1}{\beta}\log (\cosh\alpha_i + \si_{\Lambda, +}\sinh\alpha_i). 
}
In particular, 
\begin{equation}\label{deltaf}
\|\Delta_i F\|_{L^4} \le \frac{\|\alpha_i\|_{L^4}}{\beta} \le C.
\end{equation}
Now fix some $k\ge 1$. For each $i\in \Lambda$, let 
\[
N_i := \Lambda_{i,k}\cap \Lambda.
\]
Let 
\[
g_i := -\frac{1}{\beta}\log (\cosh\alpha_i + \si_{N_i, +}\sinh\alpha_i).
\]
Clearly,
\begin{equation}\label{gibd}
\|g_i\|_{L^4}\le C. 
\end{equation}
For any $x\in [-1,1]$, the quantity  $\cosh\alpha_i + x\sinh\alpha_i$ lies between the numbers $e^{-\alpha_i}$ and $e^{\alpha_i}$. The derivative of the  logarithm function in this interval is bounded above by $e^{|\alpha_i|}$. Therefore, for any $x,y\in[-1,1]$,
\[
|\log(\cosh\alpha_i + x\sinh\alpha_i) - \log(\cosh\alpha_i + y\sinh\alpha_i)|\le e^{|\alpha_i|} |x-y|. 
\]
Thus,
\eq{
|\Delta_i F - g_i|\le e^{|\alpha_i|} |\si_{\Lambda,+}-\si_{N_i,+}|,
}
and so
\eq{
\|\Delta_i F-g_i\|_{L^4} &\le \|e^{|\alpha_i|}\|_{L^8} \|\si_{\Lambda,+}-\si_{N_i,+}\|_{L^8}\\
&\le C (\ee|\si_{\Lambda,+}-\si_{N_i,+}|)^{1/8}.
}
Let $\Lambda'$ be the set of all $i\in \Lambda$ that are at a distance at least $k$ from the boundary of $\Lambda$. Then for each $i\in \Lambda'$, $N_i = \Lambda_{i,k}$, and therefore by \eqref{lambdadiff},
\begin{equation}\label{ep1}
\|\Delta_i F - g_i\|_{L^4}\le C \delta_k^{1/8}. 
\end{equation}
On the other hand, if $i\not \in \Lambda'$, then by \eqref{deltaf} and \eqref{gibd}, 
\begin{equation}\label{ep2}
\|\Delta_i F - g_i\|_{L^4}\le C.
\end{equation}
For each $i\in \Lambda$, note that the number of $j$ such that $N_i\cap N_j \ne \emptyset$ is bounded by $Ck^d$. Also, clearly, 
\begin{equation}\label{llbd}
|\Lambda \setminus \Lambda'|\le Ck^d|\partial \Lambda|.
\end{equation}
Finally, from \cite{wehraizenman90}, we know that 
\begin{equation}\label{sigbd}
\var(F)\ge C |\Lambda|.
\end{equation}
We now have all the estimates required for using Theorem \ref{normcomb}. Let 
\[
m_{p,i} := \|\Delta_i F\|_{L^p}
\]
and 
\[
\ep_{p,i} := \|\Delta_i F - g_i\|_{L^p}. 
\]
By the estimates obtained above, 
\begin{align*}
\sum_{i\in \Lambda} (2\ep_{4,i}m_{4,i}+\ep_{4,i}^2)&\le  C|\Lambda'| \delta_k^{1/8} + C|\Lambda\setminus \Lambda'|\\
&\le C|\Lambda| \delta_k^{1/8} + Ck^{d} |\partial \Lambda|. 
\end{align*}
Next, note that
\begin{align*}
\sum_{i,j: N_i\cap N_j \ne \emptyset} (m_{4,i} + \ep_{4,i})^2(m_{4,j}+ \ep_{4,j})^2&\le Ck^d|\Lambda |.
\end{align*}
Finally, 
\eq{
\sum_{i\in \Lambda} m_{3,i}^3\le C|\Lambda|. 
}
Let $\mu$ denote the law of $(F-\ee(F))/\sqrt{\var(F)}$ and let $\nu$ denote the standard normal distribution. Plugging the above bounds into Theorem \ref{normcomb}, and using the lower bound \eqref{sigbd}, we get
\[
d_{\textup{W}}(\mu, \nu) \le  C \delta_k^{1/8} + Ck^{d}\frac{|\partial \Lambda|}{|\Lambda|} + \frac{Ck^{d/2}}{\sqrt{|\Lambda|}}.
\]
Let $F_n$ and $\Lambda_n$ be as in the statement of the theorem. Let $\mu_n$ be the law of $(F_n-\ee(F_n))/\sqrt{\var(F_n)}$. Since $|\partial \Lambda_n| = o(|\Lambda_n|)$ as $n\to\infty$, the above bound shows that 
\[
\limsup_{n\to\infty} d_{\textup{W}}(\mu_n,\nu)\le C\delta_k^{1/8}. 
\]
However, $k$ is arbitrary, and $\delta_k\to0$ as $k\to \infty$. This shows that $\mu_n$ converges to $\nu$ in the Wasserstein metric.

To complete the proof of Theorem \ref{finite}, it only remains to show that the ratio $\var(F_n)/|\Lambda_n|$ tends to a finite nonzero limit. For this, we will use the first inequality of Theorem \ref{normcomb} and the following simple lemma. 
\begin{lmm}\label{idlmm}
For any integers $m\ge l\ge 0$ and $n\ge 0$,
\[
\sum_{k=0}^n \frac{{n\choose k}}{{n+m\choose k+l}} = \frac{n+m+1}{(m+1){m\choose l}}.
\]
\end{lmm}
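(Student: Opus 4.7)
The plan is to prove the identity by recognizing the reciprocal binomial coefficient as a Beta integral, which turns the sum into an application of the binomial theorem and reduces to evaluating one more Beta integral. Specifically, the starting point is the classical identity
\begin{equation*}
\frac{1}{\binom{n+m}{k+l}} = (n+m+1)\int_0^1 x^{k+l}(1-x)^{n+m-k-l}\, dx,
\end{equation*}
valid for $0\le k+l \le n+m$, which is just a rewriting of $B(k+l+1,\,n+m-k-l+1)$.

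Substituting this into the left-hand side, the sum over $k$ enters only through $\binom{n}{k} x^k (1-x)^{n-k}$, so after pulling the factor $x^l(1-x)^{m-l}$ out of the integrand I would apply the binomial theorem to collapse $\sum_{k=0}^n \binom{n}{k} x^k (1-x)^{n-k} = 1$. What remains is
\begin{equation*}
(n+m+1)\int_0^1 x^l (1-x)^{m-l}\, dx = (n+m+1)\,B(l+1,\,m-l+1),
\end{equation*}
and evaluating this Beta function as $\frac{l!(m-l)!}{(m+1)!} = \frac{1}{(m+1)\binom{m}{l}}$ yields the right-hand side.

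There is no real obstacle here once the Beta integral representation is used; the only thing to check is that the indices stay within the valid range for the integral formula, i.e.\ $k+l\le n+m$ holds for all $k\in\{0,\ldots,n\}$ since $l\le m$. As an alternative, one could proceed by induction on $n$ (with $m$ and $l$ fixed), using the Pascal-type recursion $\binom{n+m}{k+l} = \binom{n+m-1}{k+l} + \binom{n+m-1}{k+l-1}$ to relate the sum at $n$ to the sum at $n-1$, but this route is notably more tedious and offers no conceptual gain over the Beta integral computation.
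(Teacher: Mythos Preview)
Your proof is correct and essentially identical to the paper's own argument: the paper also writes $1/\binom{n+m}{k+l}$ as $(n+m+1)\int_0^1 x^{k+l}(1-x)^{n+m-k-l}\,dx$, collapses the sum via the binomial theorem, and evaluates the remaining Beta integral.
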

\begin{proof}
By the well-known formula for the beta integral,
\begin{align*}
\int_0^1 x^{k+l}(1-x)^{n+m-k-l}dx &= \frac{(k+l)!(n+m-k-l)!}{(n+m+1)!}\\
&= \frac{1}{(n+m+1){n+m\choose k+l}}.
\end{align*}
Thus, again by the beta integral formula,
\begin{align*}
\sum_{k=0}^n \frac{{n\choose k}}{{n+m\choose k+l}}  &= \int_0^1 (n+m+1)\sum_{k=0}^n{n\choose k} x^{k+l}(1-x)^{n+m-k-l}dx \\
&= \int_0^1 (n+m+1)x^l(1-x)^{m-l}dx \\
&= \frac{(n+m+1)l!(m-l)!}{(m+1)!}. 
\end{align*}
This completes the proof of the lemma.
\end{proof}
We will now show that under the conditions of Theorem \ref{finite}, $\var(F_n)/|\Lambda_n|$ tends to a finite nonzero limit. Fix $k\ge 1$ and let $N_i$ and $g_i$ be as before. Consider $g_i$ as a function of $(\phi_j)_{j\in \Lambda}$. For each $A\subseteq \Lambda$ such that $i\not \in A$, let $g_i^A$ be the value of $g_i$ after replacing $\phi_j$ with an independent copy $\phi_j'$ for each $j\in A$. Then the quantity $S$ of Theorem \ref{normcomb} is simply
\[
\frac{1}{2}\sum_{i\in \Lambda} \sum_{A\subseteq \Lambda\setminus \{i\}}\frac{g_i g_i^A}{|\Lambda|{|\Lambda|-1\choose |A|}}.
\]
This can be rewritten as
\[
\frac{1}{2}\sum_{i\in \Lambda} \sum_{A_1\subseteq N_i\setminus \{i\}}\sum_{A_2\subseteq \Lambda \setminus N_i}\frac{g_i g_i^{A_1\cup A_2}}{|\Lambda|{|\Lambda|-1\choose |A_1|+|A_2|}}.
\]
But for any $i$, $A_1$ and $A_2$ as in the above display, the definition of $g_i$ implies that
\[
g_i^{A_1\cup A_2} = g_i^{A_1}.
\]
Thus,
\[
S = \sum_{i\in \Lambda} S_i,
\]
where
\begin{equation}\label{sieq}
S_i := \frac{1}{2}\sum_{A_1\subseteq N_i\setminus \{i\}} g_i g_i^{A_1}\biggl(\sum_{A_2\subseteq \Lambda \setminus N_i}\frac{1}{|\Lambda|{|\Lambda|-1\choose |A_1|+|A_2|}}\biggr).
\end{equation}
Let $p(\Lambda, i,A_1)$ denote the term within the brackets in the above display. Note that by \eqref{identity},
\[
\sum_{A_1\subseteq N_i\setminus \{i\}} p(\Lambda, i,A_1)= \sum_{A\subseteq \Lambda\setminus \{i\}} \frac{1}{|\Lambda|{|\Lambda|-1\choose |A|}} =1. 
\]
Consequently, for any $i\in \Lambda$,
\begin{equation}\label{sibd2}
\ee|S_i| \le \frac{1}{2}\sum_{A_1\subseteq N_i\setminus \{i\}} p(\Lambda, i, A_1)\ee|g_i g_i^{A_1}|\le C. 
\end{equation}
On the other hand, it is not difficult to see from the expression \eqref{sieq} and the definitions of $g_i$, $N_i$ and $\Lambda'$ that $\ee(S_i)$ is the same for all $i\in \Lambda'$. Without loss of generality, suppose that the origin $0$ is in $\Lambda'$. Then by the preceding remark,
\[
\ee(S) = |\Lambda'| \ee(S_0)+\sum_{i\in \Lambda\setminus \Lambda'} \ee(S_i).
\]
By \eqref{sibd2} and \eqref{llbd}, this gives
\begin{equation}\label{ss0}
|\ee(S)-|\Lambda|\ee(S_0)|\le Ck^d|\partial \Lambda|.
\end{equation}
On the other hand, by Lemma \ref{idlmm}, for any $A_1\subseteq N_0$,
\begin{align*}
p(\Lambda, 0,A_1) &= \frac{1}{|\Lambda|}\sum_{A_2\subseteq \Lambda \setminus N_0}\frac{1}{{|\Lambda|-1\choose |A_1|+|A_2|}}\\
 &= \frac{1}{|\Lambda|}\sum_{k=0}^{|\Lambda\setminus N_0|}\sum_{A_2\subseteq \Lambda \setminus N_0, \, |A_2|=k}\frac{1}{{|\Lambda|-1\choose |A_1|+k}}\\
&= \frac{1}{|\Lambda|}\sum_{k=0}^{|\Lambda\setminus N_0|}\frac{{|\Lambda\setminus N_0|\choose k}}{{|\Lambda|-1\choose |A_1|+k}}\\
&= \frac{1}{|N_0|{|N_0|-1\choose |A_1|}}.
\end{align*}
This shows that when $0\in \Lambda'$, $\ee(S_0)$ depends only on $k$, $\beta$, $d$ and the random field distribution, and not on $\Lambda$. 

On the other hand, by the first inequality of Theorem \ref{normcomb}, 
\begin{align}\label{varf}
|\var(F)-\ee(S)|&\le \frac{1}{2}\sum_{i\in \Lambda} (2\ep_{2,i}m_{2,i}+ \ep_{2,i}^2),
\end{align} 
where $m_{2,i}=\|\Delta_i F\|_{L^2}$ and $\ep_{2,i} = \|\Delta_i F - g_i\|_{L^2}$, as before. Proceeding as in the proof of \eqref{deltaf}, we get $m_{2,i}\le C$ for all $i$. Similarly, proceeding as in the proofs of \eqref{ep1} and \eqref{ep2}, we get that for any $i\in \Lambda$, $\ep_{2,i}\le C$, and for $i\in \Lambda'$, 
\[
\ep_{2,i}\le C\delta_k^{1/4},
\]
where $\delta_k$ is defined as in \eqref{deltadef}. 
By \eqref{llbd} and \eqref{varf}, this gives 
\begin{equation}\label{varf2}
|\var(F)-\ee(S)|\le C\delta_k^{1/4}|\Lambda| + Ck^d |\partial \Lambda|.
\end{equation}
Now let $F_n$ and $\Lambda_n$ be as in the statement of Theorem \ref{finite}. By \eqref{ss0} and \eqref{varf2}, it follows that for each $k$, there is some number $a_k$ depending only on $k$, $\beta$, $d$ and the random field distribution, and not on the sequence $\{\Lambda_n\}_{n\ge 1}$, such that
\[
\limsup_{n\to \infty}\biggl|\frac{\var(F_n)}{|\Lambda_n|} - a_k\biggr| \le C\delta_k^{1/4}.
\]
Since $\delta_k\to 0$ as $k\to \infty$, this shows that $\{a_k\}_{k\ge 1}$ is a Cauchy sequence. Let $a$ be the limit of this sequence. Then $a$ depends only on $\beta$, $d$ and the random field distribution, and $\var(F_n)/|\Lambda_n|$ converges to $a$ as $n\to \infty$. This completes the proof of Theorem \ref{finite}, except for the last assertion about $d\le 2$. When $d\le 2$, the famous uniqueness result of \citet{aw90} for the infinite volume Gibbs state implies that 
\[
\lim_{k\to\infty} \si_{\Lambda_{i,k},+} = \lim_{k\to \infty} \si_{\Lambda_{i,k},-}. 
\]
This, together with FKG, implies that instead of \eqref{lambdadiff} we have the stronger estimate
\eq{
\ee(\sup_\gamma|\si_{\Lambda,\gamma} - \si_{\Lambda_{i,k},+}|) &\le \ee|\si_{\Lambda_{i,k}, -} - \si_{\Lambda_{i,k},+}| \to 0 \text{ as } k\to \infty.
}
The rest of the proof goes through as before.

\section{Proof of Theorem \ref{infinite}}\label{infiniteproof}
In this section $C$ will denote any positive constant that depends only on $d$ and the random field distribution. The value of $C$ may change from line to line or even within a line. As before, we will only present the proof for the plus boundary condition, since the argument for the minus boundary condition is the same.

Fix a finite nonempty set $\Lambda \subseteq \zz^d$ and consider the RFIM on $\Lambda$ with plus boundary condition. By the assumed condition on the random field distribution, the random field $\phi_i$ at a site $i\in \Lambda$ can be expressed as $u(Z_i)$, where $(Z_i)_{i\in \Lambda}$ are i.i.d.~standard normal random variables and $u$ is a Lipschitz map. Moreover, since the random field distribution is continuous, the ground state is unique with probability one. Let $\hat{\sigma}$ denote the ground state and let $G$ denote the energy of the ground state. Let $F_\beta$ denote the free energy at inverse temperature $\beta$ and let $\si_\beta$ denote the expected value of $\sigma_i$ at inverse temperature $\beta$ (on $\Lambda$, under plus boundary condition). Then it is not hard to show that 
\begin{equation*}\label{glim}
G = \lim_{\beta \to\infty} F_\beta.
\end{equation*}
Moreover, by the uniqueness of the ground state, it follows easily that almost surely,
\begin{equation}\label{silim}
\hat{\sigma}_i = \lim_{\beta \to \infty}\si_\beta.
\end{equation}
Let $\partial_i F_\beta$ be the derivative of $F_\beta$ with respect to $Z_i$. Then 
\[
\partial_i F_\beta = -u'(Z_i)\si_\beta.
\]
Thus, with probability one,
\begin{equation*}\label{glim2}
 \lim_{\beta\to \infty}\partial_i F_\beta= -u'(Z_i)\hat{\sigma}_i. 
\end{equation*}
Call the above limit $\partial_i G$. It is now easy to see from Proposition \ref{normprop} that Theorem \ref{normcont} may be applied to the function $G$, treating $\partial_i G$ as its partial derivative with respect to $Z_i$. 

For each $i\in \zz^d$, let $\hat{\sigma}_i^k$ be the ground state value of the spin at site $i$ in the RFIM on a box of side-length $2k+1$ centered at $i$ with plus boundary condition. By \eqref{silim} and the FKG property of the RFIM, it follows (similarly as in the proof of Theorem \ref{finite}) that $\hs_i^k\ge \hs_i^{k+1}$ for all $k$.  Let
\[
\hs_i^\infty := \lim_{k\to \infty} \hs_i^k.
\]
In particular, if we let
\begin{equation}\label{deltadef2}
\delta_k := \ee|\hs_i^k - \hs_i^\infty|,
\end{equation}
then 
\[
\lim_{k\to \infty} \delta_k = 0. 
\]
Now fix some $k$. If $\Lambda'$ is defined as in the proof of Theorem \ref{finite}, then for any $i\in \Lambda'$, 
\begin{equation*}\label{lambdadiff2}
\ee|\hs_i - \hs_i^k|\le \delta_k.
\end{equation*}
Let $g_i := -u'(Z_i)\hs_i^k$. Then the above inequality shows that when $i\in \Lambda'$, 
\[
\ep_{4,i} := \|\partial_i G - g_i\|_{L^4}\le C\delta_k^{1/4}.
\]
When $i\in \Lambda \setminus\Lambda'$, we trivially have $\ep_{4,i}\le C$. Also, clearly,
\[
m_{4,i} := \|\partial_i G\|_{L^4}\le C. 
\]
Let $N_i$ be as in the proof of Theorem \ref{finite}. From \cite{wehraizenman90}, we know that 
\[
\sigma^2:= \var(G)\ge C|\Lambda|.
\]
Armed with these estimates, we may now proceed as in the proof of Theorem~\ref{finite}, and using  Theorem~\ref{normcont} instead of Theorem~\ref{normcomb}, we get
\begin{align*}
\tv(\mu, \nu) &\le C \delta_k^{1/4} + Ck^{d}\frac{|\partial \Lambda|}{|\Lambda|} + \frac{Ck^{d/2}}{\sqrt{|\Lambda|}},
\end{align*}
where $\mu$ is the law of $(G-\ee(G))/\sigma$, and $\nu$ is the standard normal distribution. 

Let $G_n$ and $\Lambda_n$ be as in the statement of Theorem \ref{infinite}. Let $\mu_n$ be the law of $(G_n-\ee(G_n))/\sqrt{\var(G_n)}$. Since $|\partial \Lambda_n| = o(|\Lambda_n|)$ as $n\to\infty$, the above bound shows that 
\[
\limsup_{n\to\infty} d_{\textup{W}}(\mu_n,\nu)\le C\delta_k^{1/4}. 
\]
However, $k$ is arbitrary, and $\delta_k\to0$ as $k\to \infty$. This shows that $\mu_n$ converges to $\nu$ in the Wasserstein metric.

To complete the proof of Theorem \ref{infinite}, it only remains to show that the ratio $\var(G_n)/|\Lambda_n|$ tends to a finite nonzero limit.  As before, fix $k\ge 1$ and let $N_i$ and $g_i$ be as above. Consider $g_i$ as a function of $(Z_j)_{j\in \Lambda}$. For each $i$, let $Z_i'$ be an independent copy of $Z_i$, and for each $0\le t\le 1$, let
\[
Z_i^t := \st Z_i + \sst Z_i'.
\]
Let $g_i^t$ be the value of $g_i$ after replacing each $Z_j$ by $Z_j^t$. Then the quantity $S$ of Theorem \ref{normcont} is simply
\[
\int_0^1 \frac{1}{2\st}\sum_{i\in \Lambda} g_i g_i^tdt. 
\]
This can be rewritten as
\[
S = \sum_{i\in \Lambda} S_i,
\]
where
\begin{equation*}\label{sieq2}
S_i := \int_0^1\frac{1}{2\st}g_i g_i^t dt.
\end{equation*}
By the definitions of $g_i$, $N_i$ and $\Lambda'$, it follows that $\ee(S_i)$ is the same for all $i\in \Lambda'$. Without loss of generality, suppose that the origin $0$ is in $\Lambda'$. Thus,
\[
\ee(S) = |\Lambda'| \ee(S_0)+\sum_{i\in \Lambda\setminus \Lambda'} \ee(S_i).
\]
As in the proof of Theorem \ref{finite}, this gives
\begin{equation}\label{ss02}
|\ee(S)-|\Lambda|\ee(S_0)|\le Ck^d|\partial \Lambda|.
\end{equation}
Moreover, it is clear that when $0\in \Lambda'$, $\ee(S_0)$ depends only on $k$,  $d$ and the random field distribution, and not on $\Lambda$. On the other hand, by the first inequality of Theorem \ref{normcont}, 
\begin{align*}
|\var(G)-\ee(S)|&\le \sum_{i\in \Lambda}(2 \ep_{2,i}m_{2,i}+\ep_{2,i}^2),
\end{align*} 
where $m_{2,i}=\|\partial_i G\|_{L^2}$ and $\ep_{2,i} = \|\partial_i G - g_i\|_{L^2}$. Proceeding as in the proof of Theorem \ref{finite}, this gives
\begin{equation}\label{varf22}
|\var(G)-\ee(S)|\le C\delta_k^{1/2}|\Lambda| + Ck^d |\partial \Lambda|,
\end{equation}
where $\delta_k$ is now defined as in \eqref{deltadef2}.

Let $G_n$ and $\Lambda_n$ be as in the statement of Theorem \ref{infinite}. By \eqref{ss02} and \eqref{varf22}, it follows that for each $k$, there is some number $a_k$ depending only on $k$, $d$ and the random field distribution, and not on the sequence $\{\Lambda_n\}_{n\ge 1}$, such that
\[
\limsup_{n\to \infty}\biggl|\frac{\var(G_n)}{|\Lambda_n|} - a_k\biggr| \le C\delta_k^{1/2}.
\]
It is now easy to complete proof as in the last part of the proof of Theorem~\ref{finite}. The case $d\le 2$ also follows as before, using the uniqueness theorem of \citet{aw90} (which also holds for the ground state).

\section*{Acknowledgments}
I thank Persi Diaconis for a number of useful comments, and Nguyen Tien Dung for pointing out some omissions in the first draft. I also thank the anonymous referees for several useful suggestions.


\begin{thebibliography}{99}
\bibitem[Aizenman, Lebowitz and Ruelle(1987)]{alr87} {\sc Aizenman, M., Lebowitz, J.~L.,} and {\sc Ruelle, D.} (1987). Some rigorous results on the Sherrington-Kirkpatrick spin glass model. {\it Comm. Math. Phys.} {\bf 112} no. 1, 3--20. 

\bibitem[Aizenman and Peled(2018)]{ap18} {\sc Aizenman, M.} and {\sc Peled, R.} (2018). A power-law upper bound on the correlations in the 2D random field Ising model. {\it Preprint.} Available at https://arxiv.org/abs/1808.08351.


\bibitem[Aizenman and Wehr(1989)]{aw89} {\sc Aizenman, M.} and {\sc Wehr, J.} (1989). Rounding of first-order phase transitions in systems with quenched disorder. {\it Phys. Rev. Lett.,} {\bf 62} no. 21, 2503--2506.

\bibitem[Aizenman and Wehr(1990)]{aw90} {\sc Aizenman, M.} and {\sc Wehr, J.} (1990). Rounding effects of quenched randomness on first-order phase transitions. {\it Commun. Math. Phys.,} {\bf 130} no. 3, 489--528.

\bibitem[Bovier(2006)]{bovier06} {\sc Bovier, A.} (2006). {\it Statistical mechanics of disordered systems: a mathematical perspective.} Cambridge University Press.

\bibitem[Bricmont and Kupiainen(1987)]{bk87} {\sc Bricmont, J.} and {\sc Kupiainen, A.} (1987). Lower critical dimension for the random-field Ising model. {\it Phys. Rev. Lett.,} {\bf 59}, 1829--1832.

\bibitem[Bricmont and Kupiainen(1988)]{bk88} {\sc Bricmont, J.} and {\sc Kupiainen, A.} (1988). Phase transition in the 3d random field Ising model. {\it Commun. Math. Phys.,} {\bf 116} no. 4, 539--572.

\bibitem[Chatterjee(2008)]{cha08} {\sc Chatterjee, S.} (2008). A new method of normal approximation. {\it Ann. Probab.,} {\bf 36} no. 4, 1584--1610.

\bibitem[Chatterjee(2009)]{cha09a} {\sc Chatterjee, S.} (2009). Fluctuations of eigenvalues and second order Poincar\'e inequalities. {\it Probab. Theory Related Fields,} {\bf 143} nos. 1-2, 1--40. 







\bibitem[Chatterjee(2017)]{chatterjee17} {\sc Chatterjee, S.} (2017). On the decay of correlations in the random field Ising model. {\it arXiv preprint arXiv:1709.04151.}

\bibitem[Chatterjee and Soundararajan(2012)]{chasound} {\sc Chatterjee, S.} and {\sc Soundararajan, K.} (2012). Random multiplicative functions in short intervals. {\it Int. Math. Res. Not.,}  {\bf 2012} no. 3, 479--492. 


\bibitem[Chen, Dey and Panchenko(2017)]{chenetal17} {\sc Chen, W.-K., Dey, P.} and {\sc Panchenko, D.} (2017). Fluctuations of the free energy in the mixed $p$-spin models with external field. {\it Probab. Theory Related Fields,} {\bf 168} nos. 1-2, 41--53. 

\bibitem[Chen, Handschy and Lerman(2018)]{chl18} {\sc Chen, W.-K., Handschy, M.} and {\sc Lerman, G.} (2018). On the energy landscape of the mixed even $p$-spin model. {\it Probab. Theory Related Fields,} {\bf 171} no. 1-2, 53--95. 

\bibitem[Imbrie(1984)]{imbrie84} {\sc Imbrie, J.~Z.} (1984). Lower critical dimension of the random-field Ising model. {\it Phys. Rev. Lett.,} {\bf 53} no. 18, 1747--1750.


\bibitem[Imbrie(1985)]{imbrie85} {\sc Imbrie, J.~Z.} (1985). The ground state of the three-dimensional random-field Ising model. {\it Comm. Math. Phys.,} {\bf 98} no. 2, 145--176.

\bibitem[Imry and Ma(1975)]{imryma75} {\sc Imry, Y.} and {\sc Ma, S.~K.} (1975). Random-field instability of the ordered state of continuous symmetry. {\it Phys. Rev. Lett.,} {\bf 35}, 1399--1401.

\bibitem[Lachi\`eze-Rey and Peccati(2017)]{lp17} {\sc Lachi\`eze-Rey, R.} and {\sc Peccati, G.} (2017). New Berry--Esseen bounds for functionals of binomial point processes. {\it Ann. Appl. Probab.,} {\bf 27} no. 4, 1992--2031.

\bibitem[Stein(1972)]{stein72} {\sc Stein, C.} (1972). A bound for the error in the normal approximation to the distribution of a sum of dependent random variables. In {\it Proc. of the Sixth Berkeley Symp. on Math. Statist. and Probab., Vol. II,}  583--602.  Univ. California Press, Berkeley, Calif.

\bibitem[Stein(1986)]{stein86} {\sc Stein, C.} (1986). {\it Approximate computation of expectations.} IMS Lecture Notes--Monograph Series, 7.  Institute of Mathematical Statistics, Hayward, CA. 




\bibitem[Wehr and Aizenman(1990)]{wehraizenman90} {\sc Wehr, J.} and {\sc Aizenman, M.} (1990). Fluctuations of extensive functions of quenched random couplings. {\it J. Stat. Phys.,} {\bf 60} no. 3, 287--306.


\end{thebibliography}
\end{document}